\newcommand{\setE}              { {\mathcal E} }
\newcommand{\setG}              { {\mathcal G} }
\newcommand{\setI}              { {\mathcal I} }
\newcommand{\setM}              { {\mathcal M} }
\newcommand{\setP}              { {\mathcal P} }
\newcommand{\Nat}               { {\tt N}  }
\newcommand{\nth}               { {\tt nth}  }
\newcommand{\PA}                { {\tt PA}}
\newcommand{\false}             { {\tt false} }
\newcommand{\False}             { {\tt False} }
\newcommand{\true }             { {\tt true } }
\newcommand{\True }             { {\tt True } }
\newcommand{\interpret} [1]     { {[\![#1]\!]} }
\newcommand{\Par}               { {\tt Par} }
\newcommand{\List}            { {\tt List} }
\newcommand{\FV}              { {\tt FV} }
\newcommand{\comp}            { {\mbox{\tiny $\circ$}} }
\newcommand{\dom}             { {\tt dom} }
\newcommand{\Var}             { {\tt Var} }
\newcommand{\inj}             { {\tt j} }
\newcommand{\EM}              { {\tt EM} }
\newcommand{\Bool}            { {\tt Bool} }
\newcommand{\JUST}            { {\tt JUST} }
\newcommand{\Term}            { {\tt Term} }
\newcommand{\Player}          { {\tt P} }
\newcommand{\Opponent}        { {\tt O} }
\newcommand{\turno}           { {\tt t} }
\newcommand{\vince}           { {\tt w} }
\newcommand{\Foglie}          { {\tt Lv} }
\newcommand{\Mosse}           { {\tt Mv} }
\newcommand{\nil}             { {\tt nil} }
\newcommand{\STOP}            { {\tt STOP} }
\newcommand{\DROP}            { {\tt DROP} }
\newcommand{\END}             { {\tt END} }
\newcommand{\EXH}             { {\tt EXH} }
\newcommand{\exh}             { {\tt exh} }
\newcommand{\REPEAT}          { {\tt REPEAT} }
\newcommand{\SKIP}            { {\tt SKIP} }
\newcommand{\TRYALL}          { {\tt TRYALL} }
\newcommand{\ch}              { {\tt ch} }
\newcommand{\Ch}              { {\tt Ch} }
\newcommand{\EXISTSGAME}      { {\bigvee} }
\newcommand{\FORALLGAME}      { {\bigwedge} }
\newcommand{\EXISTS}          { {\exists \!\!\! \exists} }
\newcommand{\FORALL}          { {\forall \!\!\! \forall} }
\newcommand{\Vaananen}        { \mbox{V\"{a}\"{a}n\"{a}nen} }
\newcommand{\Fraisse}         { \mbox{Fra\"{i}ss\'{e}} }
\newtheorem{lemma}{Lemma}
\newtheorem{definition}{Definition}
\newtheorem{theorem}{Theorem}
\newtheorem{corollary}{Corollary}
\title{A Sound, Complete and Effective Second Order Game Semantics}
\author{Stefano Berardi}
\email{stefano@di.unito.it}
\urladdr{http://www.di.unito.it/~stefano/}
\thanks{}
\begin{document}

\maketitle

\begin{abstract}
%We consider a language for second order classical arithmetic $\PA^2$, with constants for primitive recursive functions and predicates, and with predicate variables. 
We define a game semantics for second order classical arithmetic $\PA^2$ (with quantification over predicates on integers and full comprehension axiom). Our semantics is effective: moves are described by a finite amount of information and whenever there is some winning strategy for the player defending the truth of the formula, then there is some primitive recursive winning strategy. Then we show that our game semantics is sound and complete for the truth assignment for formulas of $\PA^2$. In our game model, the value of a predicate variable is some family of ``generic'' games. This value is ``unknown'' during the play, but at the end of the play it is used by a ``judge of the play'' to decide who is the winner.
\end{abstract}

\section{Introduction}
\label{section-introduction}
Let us denote Second Order Classical Arithmetic with $\PA^2$. By this we mean: arithmetic with quantification over natural numbers and sets of natural numbers, with \emph{full} comprehension axiom for sets of natural numbers, or, alternatively, with the elimination rule $\FORALL X.A \implies A[P/X]$, for any predicate $P$ of the language. 

Our long-term goal is the following proof-theoretical analysis of $\PA^2$: we want to provide a characterization for the primitive recursive relations $R$ which are provably well-founded in $\PA^2$. For this reason, we are looking for an \emph{effective} game semantics: whenever there is some winning strategy for the first player, we want to have some primitive recursive winning strategy, and all moves should be described by a finite amount of information. We do not allow a move selecting an infinite set, but we allow a move selecting a finite description of an infinite set. 

%%%%%%%%%%%%%%
%OMETTERE LA CONGETTURA NELLA VERSIONE PER RIVISTA
%%%%%%%%%%%%%%
%We have the following conjecture: consider simply typed $\lambda$-calculus with base type the type of trees. Add the finite joint operation on trees, the constant $\omega$ denoting the join of finite trees, and primitive recursion on trees for every simple type. Call the resulting calculus $\setT_2$. We conjecture that every tree $T$ expressible in $\setT_2$ is provably well-founded in $\PA^2$, and that if  $\PA^2$ proves that a primitive recursive $R$ is well-founded, then $R$ may be embedded in some tree $T$ expressible in $\setT_2$. 
%%%%%%%%%%%%%%%%%%%%%%%%%%

This paper is a first step toward this goal. We define an effective second order game semantics for sequents of $\PA^2$. We interpret any sequent $\Gamma = ( A_0, \ldots, A_{n-1} )$ of $\PA^2$ by some game $\interpret{\Gamma}$. We plan to use this interpretation in another paper, in order to provide the required characterization. 

No effective game semantics exists yet for a logic corresponding to full Second Order Arithmetic. There are non-effective game semantics for logical systems equivalent to the $\Sigma^1_1$, $\Sigma^1_2$-fragments of Second Order Logic, and even for a logic equivalent to Second Order Logic. Indeed, Independence Friendly Logic has a game semantics and it is equivalent to the $\Sigma^1_1$-fragment of Second Order Logic (see Mann, Sandu and Sevenster \cite{Sandu}, Thm. 6.10, 6.16). Independence Friendly Fixpoints has a complete semantics with parity games defined by Bradfield (\cite{Bradfield-2003}): since the winner of a parity game is defined by an arithmetical formula, completeness of the semantics implies that the formulas of Independence Friendly Fixpoints are expressible by $\Sigma^1_2$-formulas. The reverse is proved in \cite{Bradfield-2003}, Thm. 18. It is also known that Independence Friendly Fixpoints may express full second order logic, but on \emph{finite} structures only (\cite{Bradfield-2005}). 
$\Vaananen$ \cite{Vaananen} used Ehrenfeucht-$\Fraisse$ game semantics to model Team Logic, a logic of implicit functional dependency equivalent to Second Order Logic. Implicitly, the game semantics of $\Vaananen$ defines a game semantics for Second Order Logic. However, all these game semantics are non-effective: there are formulas with winning strategies but no primitive recursive winning strategy, and in the case of Team Logic, a move may select \emph{any} element of a structure, including an infinite set. Besides, in these logical systems there is no explicit notion of quantification over a predicate or function variable.

De Lataillade defined a game semantics for system $F$, a second order functional language (\cite{JD08a, JD08b}), in which there an explicit notion of quantification over a formula variable, and which is effective in the sense that all moves are described by a finite amount of information. However, there is no obvious way to adapt his semantics to Second Order Logic or Arithmetic. First, De Lataillade provides no notion of winner: having a winner is not relevant for his goal of studying a functional language, but it is essential in order to interpret logic and arithmetic. Second, De Lataillade interprets a quantification $\FORALL X.A[X]$ on formulas as the fact that we may defend $A[P]$, for any formula $P$ of second order propositional logic. Again, this choice is suitable for studying a second order functional language, but if we transfer it to logic and arithmetic, it clashes with the fact that in the standard model of second order arithmetic there are sets of integers which are definable by no second order predicate $P$. In a game semantics of this kind, if we are able to convince our opponent that $A[P]$ is true for all predicates $P$ definable in second order arithmetic, then, in order to convince him that $\FORALL X.A[X]$ is true, we have to ask him to believe the statement: \emph{``if $A[P]$, for any predicate $P$ definable in second order arithmetic, then $\FORALL X.A[X]$''}. But our opponent may refuse to believe it: this statement is true in the model of Set Theory consisting of all constructible sets (\cite{Addison}), but it is false in some other model of Set Theory (\cite{Martin-Solovay}, Thm. 3).

The main contribution  of this paper is defining a game semantics in which we convince our opponent that $\FORALL X.A[X]$ is true by convincing him that $A[X]$ is true, for a family $\{X(n) | n \in \Nat\}$ of ``generic'' games, indexed over the set $\Nat$ of natural numbers. We represent a ``generic'' game $X(n)$ by a game having zero moves, whose winner is fixed but ``unknown'' during the play. The only information available during the play is that the games interpreting $X(n)$ and the negation of $X(n)$ are dual games, if we lose one we win the other. At the end of the play, some ``judge of the play'' announces the winner of each game $X(n)$, and uses it to decide the winner of the play, but no player may use the value of the parameter $X$ to decide his moves, in the much the same way in Independence-Friendly Logic no player may use the value of a ``slashed'' quantifier variable in order to choose the value of a quantifier variable. Our interpretation of an atomic formula $X(n)$ for a predicate variable $X$ is similar to the interpretation of an atomic formula $p(n)$ for a predicate constant $p$ by Lorenzen (\cite{Lorenzen}, \S 1, Def. (D10)): in both cases there is no way of discussing $X(n)$ or $p(n)$, all we may do is to match affirmed or negated occurrences of $X(n)$ or of $p(n)$, and claim that we are able to win one of them. The difference is that we interpret in this way all sub-formulas $\FORALL X.A$ occurring in a sequent $\Gamma$, while Lorenzen interpreted in this way the whole sequent $\Gamma$. Lorenzen defined an interpretation for First Order Intuitionistic Logic, and did not have to interpret $\FORALL X.A$.

%The rules of game semantics take into account the fact that if $X$ is a variable predicate and $n \in \Nat$, then $X(a)$ is a totally ``unknown'' game, whose internal structure we ignore.
 
Our main result is that our second order game semantics is sound, complete, and \emph{effective} for $\PA^2$. 
%By ``sound, complete and effective'', let us recall, we mean that the universal closure of a sequent $\Gamma$ of $\PA^2$ is true if and only if the first player has a winning strategy on $\interpret{\Gamma}$, that whenever a winning strategy for $\interpret{\Gamma}$ exists, then it may be chosen primitive recursive, and that all moves have a finite description. 
%As we said, we consider the effectiveness condition essential for our long-term goal of characterizing primitive recursive provably well-founded relations in $\PA^2$. 
If we drop the effectiveness condition, we may obtain a simple second order game semantics which is sound and complete for truth, by generalizing Tarski games from first order quantifiers %(\cite{Stanford}) 
to second order quantifiers. In this case we ask that, in order to defend the truth of $\FORALL X.A[X]$ (for $X$ unary predicate variable), we should be able to defend the truth of $A[\psi]$, for any value $\psi:\Nat \rightarrow \Bool$ assigned to $X$. This is no effective semantics: the move selecting $\psi$ requires an infinite information ($\psi$ has domain $\Nat$), and besides there are formulas whose Tarski game has a winning strategy but no recursive winning strategy. 

The effective game semantics of $\PA^2$ we introduce is new 
and conceptually simple: for this reason, we think that it is interesting of its own right. ``Conceptually simple'', however, does not mean ``'logically simple'': by Tarski's undefinability theorem, any interpretation of the truth for $\PA^2$ cannot be defined in $\PA^2$ itself. 
%In another paper, we will provide an effective interpretation of proofs of $\PA^2$ into primitive recursive winning strategies, and provide a cut-elimination procedure. The goal of this paper is to interpret the notion of truth for $\PA^2$.
This is the plan of the paper. In \S \ref{section-game} we propose our game interpretation for second order quantifiers. In \S \ref{section-pa} we introduce second order classical arithmetic $\PA^2$, the standard notion of truth for it, and our game semantics for it. In \S \ref{section-completeness} we prove that our game semantics is sound, complete and effective for $\PA^2$. In \S \ref{section-conclusion} we compare our second order game semantics with the second order game semantics for system $F$ by De Lataillade (\cite{JD08a, JD08b}).

\section{A notion of game for second order arithmetical formulas}
\label{section-game}
In this section we introduce a notion of ``game with parameters'', denoting a family of set theoretical games. Our first step is to precise very carefully how we represent at most countable trees.

%Suppose that some second order formula $A[X]$ is fixed, with some second order predicate $X$. Informally, a ``game with parameter'' for $A[X]$ is a game in which there are some positions interpreting some atomic formulas $X(n)$, for some $n \in \Nat$. In any play, we imagine that some ``judge of the play'' fixes some boolean map $\psi:\Nat \rightarrow \Bool$ interpreting $X$: if $\psi(n) = \true$ then $X(n)$ is winning for the first player, otherwise $X(n)$ is winning for the second player. The goal of the first player is to show that $A[X]$ is true for the interpretation $\psi$ of $X$, and the goal of the second player is the opposite. The boolean map $\psi$ is part of the winning conditions of the game, but the judge does not make it public before the play is over, and both players have to decide their moves without knowing $\psi$: a winning move for some $\psi$ may be loosing for another choice of $\psi$. This is our game interpretation for a free predicate variable $X$. 
%
%
%In the case of a game for $\FORALL X.A[X]$, the first player must prove that for all assignments $\psi$ to $X$ there is a winning strategy for $A[X]$ (must prove an \emph{existential} statement), and the second player must prove the opposite (must prove an \emph{universal} statement). The second player may win a game for $\FORALL X.A[X]$ without explicitly providing a value $\psi$ of $X$ for which $A[X]$ is false: the task of the second player is not to provide $\psi$. 

%In this paper will consider primitive recursive games and strategies. In order to properly define them, we have 

\subsection{Representing at most countable trees}
Let $\setI$ be any countable set. In this subsection we introduce lists, tree supports and tree structures over some countable set $\setI$. We could always assume that $\setI$ is the set $\Nat$ of natural numbers, but we prefer a more abstract approach. 
%and we fix a way to represent at most countable trees. From now on, we call ``at most countable tree'' just \emph{trees}, for short.

\emph{Lists}. Let $\List(\setI)$ denote the set of all finite lists over $\setI$ and $\List_\infty(\setI)$ denote the set of all infinite lists over $\setI$. $\List_{\le \infty}(\setI) = \List(\setI) \cup \List_\infty(\setI)$ is the set of finite or infinite lists on $\setI$. We denote the finite or infinite list with elements $n_0, n_1, n_2, \ldots$ by $\langle n_0, n_1, n_2, \ldots \rangle$. We call $\nil = \langle \rangle$ the empty list, and $\langle n \rangle$ the one-element list. 

If $ x = \langle n_0, \ldots, n_{k-1} \rangle \in \List(\setI)$, $y = \langle m_0, \ldots, m_{h-1}, \ldots \rangle \in \List_{\le \infty}(\setI)$, we set $x @ y = \langle n_0, \ldots, n_{k-1}, \ m_0, \ldots, m_{h-1}, \ldots \rangle  \in \List_{\le \infty}(\setI)$, and we call $x@y$ the concatenation of $x,y$. $x@y$ is finite or infinite according if $y$ is finite or infinite. We extend concatenation to the case in which one or both lists are replaced by elements of $\setI$: if $i,j \in \setI$ we set  $x @ i = x @ \langle i \rangle$, $j @ y = \langle j \rangle @ y$ and $i @ j = \langle i \rangle @ \langle j \rangle$. We denote by $\le$ the prefix order on $\List_{\le \infty}(\setI)$, and we define $<_1$
%the one-step prefix relation 
by $x <_1 x @ i$ for all $x \in \List(\setI)$, $i \in \setI$. 

We distinguish between a ``tree support'', which is a ``plain'' tree, and a ``tree structure'', which is a tree with some additional information.

 \emph{Tree support}. A tree support over $\setI$, a \emph{tree support} for short, is any set $I \subseteq \List(\setI)$ such that  $\nil \in I$ and $I$ is closed under prefix. We call $\nil$ the root of $I$, any $x \in I$ a node of $I$, and any $x @ n \in I$ a child of $x$ in $I$ of index $n \in \setI$. A leaf of $I$ is any node of $I$ with no children in $I$. We write $\Foglie(I) \subseteq I$ for the subset of leaves of $I$. We call $I = \{\nil\}$ the support for the atomic tree (the one-node tree). A branch of $I$ is any (possibly infinite) sequence $\langle n_0, n_1, n_2, \ldots \rangle \in \List_{\le \infty}(\setI)$ over $\setI$, whose finite prefixes are all in $\setI$. In our formalism, the finite branches of $I$ are exactly the elements $x \in I$: $x$ is used to represent the branch of $I$ whose last node is $x$. We write $I_\infty$ for the set of infinite branches of $I$. We set $I_{\le \infty} = I \cup I_\infty$ for the set of branches of $I$. Given any node $x \in I$, the support $I_x$ for the sub-tree of descendants of $x$ in $I$ is defined by: $I_x = \{y \in \List(\setI) | x@y \in I\}$. By definition, $I_\nil = I$. If $\langle i \rangle \in I$, we call $I_{\langle i \rangle}$ an immediate sub-tree of $I$. If $I \subseteq \List(\setI)$ is tree support, and $i \in \setI$, we set $i @ I = \{i @ x | x \in I\}$. 

%$i @ I$ is the tree whose root has only one child $\langle i \rangle$. The descendants of $\langle i \rangle$ in $i @ I$ are in bijection with the nodes of $T$.

The canonical injection $\inj_x: (I_x)_{\le \infty} \rightarrow I_{\le \infty}$ is defined by $\inj_x(y) = x@y$ for all $y \in (I_x)_{\le \infty}$. $\inj_x(y)$ is finite or infinite according if $y$ is. 
If $y \in I_x$ is finite, we think of $\inj_x(y)$ as the node in $I$ representing $y$. By definition unfolding, we have $I_x = \inj_x^{-1}(I)$. $y$ is a leaf in $I_x$ if and only if $\inj_x(y)$ is a leaf in $I$: thus, $\Foglie(I_x) = \inj_x^{-1}(\Foglie(I))$.

%We extend $\inj_x$ to a map $ (I_x)_{\le \infty} \rightarrow I_{\le \infty}$ from the infinite branches of $I_x$ to the infinite branches of $I$: we set $\inj_x(y) = x@y$ for all $y \in (I_x)_{\le \infty}$. 

\emph{Tree structure}. A tree structure over $\setI$, a tree structure for short, is a list $T = (|T|, \Ch_T, \ch_T )$, with $|T|$ tree support and $\Ch_T:|T| \rightarrow [0, \omega]$ computing the number of children of each $x \in T$, and $\ch_T : |T|, \Nat \rightarrow T$ returning the $n$-th child $x @ i_n $ of any $x \in T$ whenever $n < \Ch(x)$, returning $\nil$ o.w.. A tree structure $T$ is \emph{primitive recursive} if $|T|$ is a primitive recursive subset of $\List(\setI)$ and $\Ch_T, \ch_T$ are primitive recursive maps. 

Any tree support has a unique expansion to a tree structure, but the two concepts are different: a tree support may be primitive recursive and yet its tree structure may not be recursive, in the case $\ch_T$ is not recursive. If $T$ is a primitive recursive tree structure, then some frequently used predicate and functions on $T$ are primitive recursive. For instance: $\Foglie(|T|)$ is a primitive recursive predicate, because we assumed having a primitive recursive map computing the number of children of any node, and the leaves are the nodes with $0$ children. 
%Another example: there is a primitive recursive map taking a node $x \in T$ and a list $L$ of children of $x$ in $T$, deciding whether there is some child of $x$ in $T$ which is not in $L$, and returning the first child not in $L$ if any exists.

Given a tree structure $(|T|, \Ch_T, \ch_T)$, and a node $x \in |T|$, the sub-tree structure $T_x$ of descendants of $x$ in $T$ is defined by assigning to each node of $|T|_x$ the same values we assign to its image in $|T|$. To put otherwise, we set: $T_x = (|T|_x, \Ch_T \comp \inj_x,  \ch_T \comp \inj_x)$.

\subsection{Parametric Games}
In this sub-section we recall the folk-lore notion of set-theoretical game, and we extend it to the notion of parametric game, which we use to represent formulas having free predicate variables. 

We assume we have two players $\{ \Player, \Opponent \}$. We call $\Player$ ``Player'' and $\Opponent$ ``Opponent''. If $g \in \{ \Player, \Opponent \}$, then $g^\bot$ denotes the player opposed to $g$: we set $\Player^\bot=\Opponent$ and $\Opponent^\bot=\Player$. $(.)^\bot$ is involutory without a fixed point: $g^{\bot\bot} = g$ and $g^{\bot} \not = g$ for all $g \in \{ \Player, \Opponent \}$. We fix some countable set $\setM$, which we call the set of ``moves''.

A set-theoretical game includes a tree support over $\setM$. The nodes of the tree are called the \emph{positions} of the game and are lists of moves, the root is the initial position and is the empty list. There is a rule deciding, for each position, which player should move next. The player moving next selects a child of the current position as next position. As a result the players define a branch of the tree, which either ends in a leaf of the tree, or it is infinite. There is a rule of the game deciding who is the winner both for plays ending in a leaf and for infinite plays.

A parametric game is a slightly more general notion: the winner of some leaf $x$ of the game $G$ may be not settled by the definition of $G$, but may be a parameter $a$, whose range is $\{\Player, \Opponent\}$. A leaf $x$ with parameter $a$ represents a ``generic'' game, whose winner is decided by some ``judge of the play'', but it is not announced before the play is over. Indeed, the value of $a$ is not part of the definition of a parametric game, and therefore cannot be used by a strategy to decide the next move, even if the winner of a play may depend on the choice of $a$. We use the leaf $x$ in order to represent an atomic formula $X(n)$ which may be instanced to true or false.  

%We may use the same leaf to represent any formula $Y(c_n)$, for some predicate variable and some constant $c_n$, whose truth value may be true or false. 

We suppose to be fixed some countable set $\Par$ of parameters having an involutory operation without fixed point $a \in \Par \mapsto a^\bot \in \Par$, representing negation. 
%We suppose be fixed some countable set $\setJ$ of parameter names, and we define $\Par = \{\langle j, + \rangle, \langle j, - \rangle | j \in \setJ\}$. We call $\langle j, + \rangle$ a positive parameter and $\langle j, - \rangle$ a negative parameter. We define an operation $(.)^\bot$ on parameters, involutory and without fixed points, by switching $+$ and $-$: we set $ \langle j, + \rangle^\bot = \langle j, - \rangle $ and $\langle j, - \rangle^\bot = \langle j, + \rangle$, for all $j \in \setJ$. 
If $I \subseteq \Par$ then $I^\bot = \{ a^\bot | a \in I \}$. We say that $I$ is \emph{self-dual} if $I^\bot = I$. The smallest self-dual set including $I$ is $J = I \cup I^\bot$. Indeed, if $I \subseteq J$ and $J$ is self-dual, then $I^\bot \subseteq J^\bot = J$, hence $I \cup I^\bot \subseteq J$; and $(I \cup I^\bot)^\bot = I^\bot \cup I^{\bot\bot} = (I^\bot \cup I)$. 

%Thus, $I$ is self dual if and only if $I^\bot \subseteq I$.

We may now define parametric games.

\begin{definition}[Parametric Games]
 A parametric game structure over $\setM$, a \emph{parametric game} for short is any list $G = (|G|, \Ch_G, \ch_G, \turno_G, \vince_G, G_{\Player}, G_{\Opponent}) $, consisting of:
\begin{enumerate}
\item
some tree structure $(|G|, \Ch_G, \ch_G)$ over $\setM$, which we call the tree of positions of the game.
\item
some map $\turno_G: |G|\setminus \Foglie(|G|) \rightarrow \{ \Player, \Opponent \}$, taking any position $x \in |G|\setminus \Foglie(|G|) $ and returning the player $g=\turno_G(x)$ moving from $x \in |G|\setminus \Foglie(|G|)$. 
\item
some map $\vince_G: \Foglie(|G|)  \rightarrow \{ \Player, \Opponent \} \cup \Par $, taking any position $x \in \Foglie(|G|)$ and returning either the player $g = \vince_G(x) \in \Par$ winning a play which ends in $x$, or some parameter $a = \vince_G(x) \in \Par$.
\item
some partition $( G_{\Player}, G_{\Opponent} )$ over the set $|G|_{\infty}$ of infinite branches of $|G|$ among those won by $\Player$ and those won by $\Opponent$.
\end{enumerate}
We say that $G$ is a \emph{set-theoretical game} if there are no parameters in $G$, that is, if $\vince_G(\Foglie(|G|)  \subseteq \{ \Player, \Opponent \}$. $\setP$ is the set of parametric games and $\setG \subseteq \setP$ is the subset of set-theoretical games. 
\end{definition}

We use the traditional game %notation and 
terminology.

\begin{definition}[Game terminology]
\begin{enumerate}
\item
Any $p \in |G|$ is a position of $G$.
\item
Any $p \in |G|_{\le \infty}$ is a play of $G$.
\item
If $m \in \Nat$ and $x@m \in |G|$ them $m$ is a move from $x$ in $G$. 
\item
$\Mosse(G) = \{m \in \Nat | \exists x \in |G|.(x@m \in |G|)\}$ is the set of moves from some $x \in |G|$.
\item
An initial move of $G$ is a move from the root $\nil$ of $G$ 
\end{enumerate}
\end{definition}
 
We use a parametric game $G \in \setP$ in order to represent a family of set-theoretical games $\rho(G) \in \setG$ depending on a game assignment $\rho$ to the parameters labeling some leaves of $G$. $\rho$ plays the role of judge of the play, deciding the winner when a game ends in a leaf with a parameter. Therefore our game semantics requires a notion of parameter assignment $\rho$. In the case $G$ interprets some second order formula $A[X]$, $\rho$ corresponds to an interpretation of $X$ by some boolean function $\psi$, therefore to some predicate assignment to the predicate variable $X$ of $A[X]$.

\begin{definition}[Parameter assignment]
Assume $G \in \setP$ and $I, J \subseteq \Par$ are self-dual.
\begin{enumerate}
\item
The \emph{set of parameters} of $G$ is $\FV(G) =$ the smallest self-dual set including $ \vince_G(\Foglie(|G|))\cap \Par$, the set  of parameters assigned to some leaf of $G$. 
\item
An $I$-assignment is any dual-preserving map $\rho: I \rightarrow \{ \Player, \Opponent \}$ (i.e., such that $\rho(a^\bot) = \rho(a)^\bot$ for all $a, a^\bot \in I$). A $G$-assignment is any $I$-assignment for some $I \supseteq \FV(G)$.
\item
$\setE(I)$ is the set of all $I$-assignments and $\setE(G)$ of all $G$-assignments (of all $\setE(I)$-assignment for some $I \supseteq \FV(G)$). 
\end{enumerate}
We extend any assignment $\rho$ to a map $:\Par \cup \{\Player, \Opponent\} \rightarrow \{\Player, \Opponent\}$ by $\rho(g) = g$ for all $g \in \{\Player, \Opponent\}$.
\end{definition}

If $\rho:I \rightarrow \{ \Player, \Opponent \}$, $\eta : J \rightarrow \{ \Player, \Opponent \}$, then we define the over-writing $\rho, \eta: I \cup J \rightarrow \{ \Player, \Opponent \}$ by $(\rho, \eta)(a) = \eta(a)$ for all $a \in J$ and $(\rho, \eta)(a) = \rho(a)$ for all $a \in I \setminus J$.

If $\FV(G) = \emptyset$, then $\rho = \emptyset$ is a $G$-assignment and $\rho(G) = G$. 

%For all $G \in \setP$, by definition we have $\FV(G) = \FV(G^\bot)$ and $\FV(G) = \FV(G)^\bot$. We prove the second equality. If we unfold definitions we obtain that: 
%$\FV(G^\bot) = (\vince_{G^{\bot}}(\Foglie(|G^{\bot}|)) \cup  \vince_{G^\bot}(\Foglie(|G^\bot|))^{\bot} ) \cap \Par = (\vince_{G}(\Foglie(|G^\bot|))^{\bot} \cup  \vince_{G}(\Foglie(|G^\bot|))^{\bot\bot} ) \cap \Par = (\vince_{G}(\Foglie(|G|))^{\bot} \cup  \vince_{G}(\Foglie(|G|)) ) \cap \Par = \FV(G)$.

Given $G \in \setP$, we define a set-theoretical play $\rho(G) \in \setG$. $\rho(G)$ is obtained by replacing any parameter $a$ in $G$ with the player $\rho(a) \in \{ \Player, \Opponent \}$. 

%In the same way we define $\sigma(G) \in \setP$ from a $G$-substitution $\sigma$. 

\begin{definition} [$\rho(G)$]
Assume that $G = (|G|, \Ch_G, \ch_G, \turno_G, \vince_G, G_{\Player}, G_{\Opponent}) \in \setP$ is any parametric game. Assume $\rho \in \setE(G)$ is any $G$-assignment. 
%%and $\sigma$ is any $G$-parameter substitution. 
%For all $x \in \Foglie(|G|) \cup \{ \Player, \Opponent \}$ we set: 
%
%%\begin{enumerate}
%%\item 
%$\vince_{\rho(G)}(x) = g$ if $\vince_G(x)  = g \in \{\Player, \Opponent\}$ and $\vince_{\rho(G)}(x) = \rho(a)$ if $\vince_G(x) = a \in \Par$.
%%\item 
%%$\vince_{\sigma(G)}(x) = g$ if $\vince_G(x)  = g \in \{\Player, \Opponent\}$ and $\vince_{\sigma(G)}(x) = \sigma(a)$ if $\vince_G(x) = a \in \Par$. 
%%\end{enumerate}
Then we set:
 $\rho(G) = (|G|, \Ch_G, \ch_G, \turno_G, \rho \comp \vince_{G}, G_{\Player}, G_{\Opponent})$ 
%and  
%$\sigma(G) = (|G|, \Ch_G, \ch_G, \turno_G, \vince_{\sigma(G)}, G_{\Player}, G_{\Opponent})$
\end{definition}

%A trivial example: if $\FV(G) = \emptyset$, then $\rho = \emptyset$ is a $G$-assignment, and $\emptyset(G) = G$.

Since $|\rho(G)| = |G|$, then the strategies for $G$ and for all $\rho(G)$ are the same: this is another way of expressing the fact that a strategy cannot use the values of the parameters of the game to decide the next move. However, the same strategy may be $\Player$-winning for some $\rho_1(G)$ and not $\Player$-winning for some $\rho_2(G)$, because the winner in a leaf of $\rho(G)$ may depend on $\rho$.

We call $ (|G|, \Ch_G, \ch_G, \turno_G, \vince_G) $ the \emph{finitary part} of $G$ and $(G_{\Player}, G_{\Opponent})$ the \emph{infinitary part} of $G$. $G$ has a \emph{primitive recursive finitary part} if the predicates and functions of the finitary part of $G$ are all primitive recursive. For instance, $\rho(G)$ is primitive recursive if both the finitary part of $G$ and the map $\rho$ are primitive recursive. 

Predicates and functions of the finitary part of $G$ have domain the countable set $\setM$ of moves, or some list over $\setM$: this is why we call them ``finitary''. $(G_{\Player}, G_{\Opponent})$ is a partition over a set of infinite lists over $\setM$, hence a predicate over infinite objects: this is why we call it ``infinitary''. When $G$ has primitive recursive finitary part, its infinitary part $(G_\Player,G_\Opponent)$ may still be a non-computable predicate. In this paper we interpret the truth of $L(\PA^2)$ by games with primitive recursive finitary part, the effective part of our semantics. We will prove that our semantics is sound and complete: as a corollary, by Tarski's undefinability theorem, the set of winning conditions $(G_{\Player}, G_{\Opponent})$ we use cannot be defined in $L(\PA^2)$.

For any $x \in |G|$, the sub-game $G_x$ of descendants of $x$ in $G$ is defined by assigning to each node and to each infinite branch of $G_x$ the same values we assign to its image in $G$. The dual game $G^\bot$ is defined by switching the role of $\Player$ and $\Opponent$, and $a$ with $a^\bot$ for any $a \in \FV(G)$. If $f:I \rightarrow \{\Player, \Opponent\}$ is any map, we define the dual map $f^\bot$ point-wise, by $f^\bot(i) = f(i)^\bot$ for all $i \in I$. 

\begin{definition}[Sub-games and dual games]
Let $G \in \setP$
%$G = (|G|, \Ch_G, \ch_G, \turno_G, \vince_G, G_{\Player}, G_{\Opponent}) \in \setP$ 
and $x \in |G|$ 
\begin{enumerate}
\item
$G_x = (|G|_x, \Ch_G \comp \inj_x, \ch_G \comp \inj_x, \turno_G \comp \inj_x, \vince_G \comp \inj_x, \inj_x^{-1}(G_\Player),  \inj_x^{-1}(G_\Opponent))$ is the sub-game of $G$ of root $x$. 
\item
If $m$ is any initial move of $G$, the immediate sub-game of $G$ defined by $m$ is $G_{\langle m \rangle}$.
\item
The dual game of $G$ is $G^\bot = (|G|, \Ch_G, \ch_G,  \turno^\bot_G, \vince^\bot_G, G_{\Opponent}, G_{\Player})$.
\end{enumerate}
\end{definition}

For all $G \in \setP$, by $\FV(G)$ self-dual we have $\FV(G) = \FV(G^\bot) = \FV(G)^\bot$. 
%We prove the second equality. If we unfold definitions we obtain that: 
%$\FV(G^\bot) = (\vince_{G^{\bot}}(\Foglie(|G^{\bot}|)) \cup  \vince_{G^\bot}(\Foglie(|G^\bot|))^{\bot} ) \cap \Par = (\vince_{G}(\Foglie(|G^\bot|))^{\bot} \cup  \vince_{G}(\Foglie(|G^\bot|))^{\bot\bot} ) \cap \Par = (\vince_{G}(\Foglie(|G|))^{\bot} \cup  \vince_{G}(\Foglie(|G|)) ) \cap \Par = \FV(G)$.
By definition, for all $G \in \setP$ we have $G_{\nil} = G$ and $G^{\bot\bot} = G$ and $G \not = G^\bot$: the map $(.)^\bot$ on $\setP$ is involutory and without fixed points. We have $G \in \setG$ ($G$ is set-theoretical, without parameters) if and only if $G^\bot \in \setG$.

Assume $p = \langle x_0, x_1, x_2, \ldots \rangle \in |G|_{\le \infty}$ is any finite or infinite play of $G$. $p$ starts from the root of $G$, the position number $i$ of $p$ is $x = \langle x_0, x_1, \ldots, x_{i-1}\rangle$. If $x$ is not a leaf then then the player $g = \turno_G(x)$ selects the next move $x_i$.
%If $x \in |G| \setminus \Foglie(|G|)$ then the player $g = \turno_G(x)$ selects the next move $x_i$ in $p$.

A terminated play is any maximal list $p$ of $|G|_{\le \infty}$. If $p \in |G|$ is finite then $p$ is a leaf of $|G|$ and the winner is  $\vince_G(p)$, if $\vince_G(p) \in \{\Player, \Opponent\}$, otherwise the winner is decided w.r.t. some $G$-assignment $\rho$. If $p \in |G|_{\infty}$ is infinite, then the winner is $\Player$ if $p \in G_{\Player}$, is $\Opponent$ if $p \in G_{\Opponent}$, independently from $\rho$. 

%A more synthetic definition of game would be possible. For instance, we may replace $\turno_G$ and $\vince_G$ by their the union $\turno_G \cup \vince^\bot_G: |G| \rightarrow \{ \Player, \Opponent \}$. 

%We might re-state the notion of game by removing the map $\vince_G$ on leaves: it is enough to say that if a player $g$ should move and cannot, because the current position is a leaf, then $g$ loses the play. However, we preferred to set apart the case of a leaf as a special case, because we had the impression that our game semantics it is easier to describe in this way. 

\subsection{A notion of game strategy}
A strategy $\sigma$ is a particular set of plays. When $g$ is using a strategy $\sigma$, and $p \in \sigma$ and $\turno_G(p) = g$, then we think of the one-step extensions $q = p@m$, $q \in \sigma$ of $p$ in $\sigma$ as the suggestions of $\sigma$ for a move $m$ of $g$ from $p$. The suggestions may consist of no move, of one move, of two or more possible moves. If $\turno_G(p) = g^\bot$, then we think of the one-step extensions $q = p@m$, $q \in \sigma$ as the set of replies of $g^\bot$ considered by $\sigma$. These replies may not be all possible replies of $g^\bot$. Formally, a strategy for $G$ is any tree support included in  $|G|$.

\begin{definition}[Strategies]
Assume $G \in \setP$ be any set-theoretical game. 
\begin{enumerate}
\item $\sigma$ is a $G$-strategy if $\sigma$ is a tree support and $\sigma \subseteq |G|$. We write $\sigma:G$ for ``$\sigma$ is a $G$-strategy''. 
%\item
%For any $x \in \sigma$ we call $\sigma_x$ a sub-strategy of $\sigma$. 
\item
$\sigma$ follows $\tau$ on $x \in \sigma$ if $\tau = \sigma_x$.
\end{enumerate}
\end{definition}

We informally describe some desirable features of strategies. $\sigma$ is a $g$-strategy if $\sigma$ takes in to account all moves of the opponent of $g$. A $g$-strategy $\sigma$ is $g$-total if $\sigma$ always suggests some move when $g$ should move in $\sigma$. A $g$-total $\sigma$ is $g$-partially winning if $\sigma$ wins all finite maximal plays in $\sigma$. A $g$-partially winning $\sigma$ is $g$-winning if $g$ wins all infinite maximal plays.  

\begin{definition}[Winning strategies]
Assume $G \in \setP$ and $\sigma:G$ is a strategy on $G$. Let $g \in \{\Player, \Opponent\}$ be a player.
\begin{enumerate}
\item
$\sigma$ is a $g$-strategy if for all $p \in \sigma$, if $\turno_G(p) = g^\bot$ then for all $(q >_1 p)$, $(q \in |G|)$ we have $q \in \sigma$. 
\item 
$\sigma$ is $g$-total if $\sigma$ is a $g$-strategy and for all $p \in \sigma \setminus \Foglie(|G|)$ if $\turno_G(p) = g$ then there is some $q >_1p$, $q \in \sigma$. 
\item
$\sigma$ is $g$-partially winning if $\sigma$ is $g$-total and for all $p \in \sigma \cap \Foglie(|G|)$ we have $\vince_G(p) = g$. 
\item
$\sigma$ is $g$-winning if and only if $\sigma$ is $g$-partially winning and $\sigma_{\infty} \subseteq G_g$ (all infinite branches in $\sigma_{\infty}$ are won by $g$). 
\end{enumerate}
\end{definition}

%%%%%%%%%%%%
% queste due definizioni non sono usate in questo lavoro ma nel successivo
% 23/09/2016
%%%%%%%%%%%%
%We need two more abbreviations. We say that $\sigma$ is $g$-terminating if $g$ wins all infinite maximal plays: infinite plays may occur when $\sigma$ is $g$-terminating, but only if they are a way for $g$ to win. $\sigma$ is $g$-deterministic if $\sigma$ suggests at most one move for $g$. We make all these definitions precise.
%
%\begin{definition}[$g$-Terminating and $g$-deterministic strategies]
%\begin{enumerate}
%\item
%$\sigma$ is $g$-terminating if $\sigma_{\infty} \subseteq G_g$ (all infinite branches in $\sigma_{\infty}$ are won by $g$). 
%\item
%$\sigma$ is $g$-deterministic if for all $p \in \sigma$, if $(\turno_G(p) = g)$, $(q, q' >_1 p)$, $(q, q' \in \sigma)$, then $q = q'$.
%\end{enumerate}
%\end{definition}

By definition, $\sigma$ $g$-winning implies that $\sigma$ is $g$-total. Let $G \in \setP$ be any game. We say that $G$ is $g$-winning if there is some $g$-winning strategy $\sigma$. $G$ is determined if $G$ is $g$-winning for some $g \in \{\Player, \Opponent\}$. Games in $\setP$ may not be determined because the winner of some leaves is not settled. If we assume the Choice axiom, then there are games in $\setG$ (set-theoretical, without parameters) which are not determined. Remark that $G$ is $g$-winning if and only if $G^\bot$ is $g^\bot$-winning.

 We denote with $\vince(G)$ the winner of $G$ if $G$ is determined, otherwise we let $\vince(G)$ undefined.

%Otherwise we set $\vince(G) = \Opponent$. We choose $\Opponent$ in the case there is no winner for $G$ because we want to consider $\Player$ as the winner of $G$ only if he really is. 

 If $\sigma:G$ (i.e., if $\sigma \subseteq |G|$) and $x \in \sigma$, then $\sigma_x$ is a sub-tree support of $\sigma$ and $\sigma_x \subseteq |G_x|$, therefore $\sigma_x : G_x$. $\sigma_x$ is a strategy for the sub-game of $G$ of root $x$. By definition unfolding, if $\sigma:G$ is a $g$-strategy, is $g$-total, $g$-partially winning, $g$-winning for $G$, then $\sigma_x : G_x$ is, respectively: a $g$-strategy, is $g$-total, $g$-partially winning, $g$-winning for $G_x$. To check that $\sigma_x$ is $G_x$-winning, we use the fact that, by definition, we have $(G_x)_g = \inj_x^{-1}(G_g)$. That, is, the infinite plays of $G_x$ which are $g$-winning are exactly the counter-images of the infinite plays of $G$ which are $g$-winning.

\subsection{Tarski games: a game interpretation for first order connectives}
We define some operations on the set $\setP$ of parametric games corresponding to truth values, boolean connectives, first order quantifiers in logic. Using these operations we may interpret any first order closed arithmetical formula $A$ by some set-theoretical game $G$, in such a way that $G$ is $\Player$-winning if and only if $A$ is true. This game semantics is called Tarski games %\cite{Stanford}, 
it is sound and complete for first order arithmetic, but it lacks an interpretation for second order quantifiers $\FORALL X.A[X]$, and lacks primitive recursive $\Player$-winning strategies for many true formulas $A$.\footnote{Indeed, assume that $p(x,y,z)$ is the primitive recursive predicate stating that $f_x$, the partial recursive map number $x$, when applied to $y$ terminates in $z$ steps. Then the formula $A = \forall x,y.(\exists z.p(x,y,z)) \vee (\forall z.p^\bot(x,y,z))$ states that either $f_x(y)$ terminates or not. The Tarski game for $A$ has $\Player$-winning strategies, but all of them decide the Halting Problem and therefore are not recursive.} 

Assume we have some at most countable family of games $\{G_i | i \in I\}$ interpreting a family of formulas $\{A_i | i \in I\}$. We will define a game $G = \vee_{i \in I} G_i$ interpreting the truth of the possible infinite disjunction $A = \vee_{i \in I} A_i$. $\Player$ plays first in $G$, selecting some $i \in I$, then the plays goes on as in $G_i$.
%: $\Player$ has a winning strategy on $G$ if and only if he has a winning strategy on some $G_i$, if and only if some $A_i$ is true, if and only if $A$ is true. 
If $I = \emptyset$ then $\Player$ cannot move in $G$ and $\Opponent$ wins: $G$ interprets the constant $\False$. If $I$ has two elements then $G$ interprets a binary disjunction and if $I$ is countable then $G$ interpret an existential over $\Nat$. $G = \wedge_{i \in I} G_i$ is the dual game, interpreting some possibly infinite conjunction, interpreting the constant $\True$ if $I = \emptyset$, the binary conjunction if $I$ has two elements, the universal quantifier on $\Nat$ if $I$ is countable.

\begin{definition}[Conjunctions and disjunctions of parametric games]
\mbox{} If $I \subseteq \Nat$ and $\{G_i | i \in I\} \subseteq \setP$, then $G = \vee_{i \in I} G_i \in \setP$ is defined as follows. 
\begin{enumerate}
\item
$|G| = \{\nil\} \cup \{i@x | (i \in I) \wedge (x \in |G_i))|\}$ (\emph{$|G|$ is the tree whose immediate subtrees are all $|G_i|$})
\item
If $I = \emptyset$ then $\vince_G(\nil) = \Opponent$ . If $I \not = \emptyset$ then $\turno_G(\nil) = \Player$ (\emph{$\Player$ moves first})
\item
for all $i \in I$, all $x \in |G_i|\setminus \Foglie(|G_i|)$: $\turno_G(i@x) = \turno_{G_i}(x)$ (\emph{the game continues in some $G_i$})
\item
for all $i \in I$, all $x \in \Foglie(|G_i|)$: $\vince_G(i@x) = \vince_{G_i}(x)$ (\emph{winning conditions are taken from each $G_i$})
\item
$G_g = \{i @ x | (i \in I) \wedge (x \in (G_i)_g) \}$ for all $g \in \{\Player, \Opponent\}$ (\emph{winning conditions are taken from each $G_i$})
\end{enumerate}
We set $\wedge_{i \in I} G_i = (\vee_{i \in I} G^\bot_i)^\bot$.
\end{definition}

%A $I$-game-substitution, just ``substitution'' for short, is any dual-preserving map $S : I \rightarrow \setP$ from parameters to parametric games (i.e., such that $S(a^\bot) = S(a)^\bot$ for all $a \in \dom(\sigma)$).
%
%If $S:I \rightarrow \setP$ is a game-substitution and $G \in \setP$, we define the \emph{game-substitution} $S(G) \in \setP$ as follows. We first define $S(G)$ for $G$ atomic game. For all $a \in \Par$ we set $S(\END(a)) = S(a)$ if $a \in I$, and $S(\END(a)) = \END(a)$ if $a \not \in I$. For all $g \in \{\Player, \Opponent\}$ we set $S(\END(g)) = \END(g)$. More in general, $|S(G)|$ is the union of $|G|\setminus \Foglie(|G|)$ and of $|S(G_x)|$ for $x \in \Foglie(|G|)$, and $\Ch_{|G|}$, $\ch_{|G|}$ is defined accordingly. $\turno_{|S(G)|}$ is the union of $\turno_{|G|}$ restricted to $|G|\setminus \Foglie(|G|)$ and of all $\turno_{|S(G_x)|} \comp \inj_x$ for $x \in \Foglie(|G|)$. $\vince_{|S(G)|}$ is the union of all $\vince_{|S(G_x)|} \comp \inj_x$ for $x \in \Foglie(|G|)$. For all $g \in \{\Player, \Opponent\}$ we set $S(G)_g = G_g$ union all $x@(S(G_x))_g$ for $x \in \Foglie(|G|)$. 
%
%By definition unfolding we have $S(c_{i \in I} G_{\langle i \rangle}) = c_{i \in I} S(G)_{\langle i \rangle}$ for $c = \vee, \wedge$. The map $S: \setP \rightarrow \setP$ we defined is dual-preserving: by definition unfolding we have $S(G^\bot) = S(G)^\bot$. We will need the notion of substitution over a game in a forthcoming paper, interpreting proofs of $L(\PA^2)$ into recursive $\Player$-winning strategies.

Let $G \in \setP$. We say that $G$ is an atomic game, or just \emph{atomic}, if and only if $|G| = \{\nil\}$ is an atomic tree (is the one-node tree). The unique node of $G$ is a leaf, all plays have $0$ moves and in order to precise $G$ we only have to precise the winner (the value of $\vince_G(\nil)$). Let $x \in \{\Player, \Opponent\} \cup \Par$: we define $G = \END(x) $ as the atomic game such that $\vince_G(\nil) = x$. We have $\END(\Player) = \wedge_{i \in \emptyset} G_i$ and $\END(\Opponent) = \vee_{i \in \emptyset} G_i$ and $\END(x)^\bot = \END(x^\bot)$. We call $\END(a)$ for $a \in \Par$ a \emph{generic game}. The winner of $\END(a)$ is given by the value $\rho(a) \in \{\Player, \Opponent\}$ that some ``judge of the play'' $\rho$ assigns to the parameter $a$. $G$ is a atomic if and only if $G = \END(x)$ for some $x \in \{\Player, \Opponent\} \cup \Par $.

If $c = \vee, \wedge$, then the root of $c_{i \in I} G_i$, if it is a leaf, it is labeled by $\Player$ or $\Opponent$, not by a parameter. We deduce that $\FV(c_{i \in I} G_i) = \cup_{i \in I} \FV(G_i)$. In particular, if all $G_i$ are in $\setG$ (if $\FV(G_i) = \emptyset$ for all $i \in I$) then $\FV(c_{i \in I} G_i) = \cup_{i \in I} \emptyset = \emptyset$, hence $c_{i \in I} G_i \in \setG$. We proved that the set $\setG \subsetneqq \setP$ of set-theoretical games is closed under the operations: $\END(g)$ for $g = \Player, \Opponent$ and $\vee (.)$, $\wedge (.)$.

For any parametric non-generic game $G$ (for all $G \in \setP$ such that $G \not = \END(a)$ for all $a \in \Par$) we have the following characterization. $G$ is $c_{i \in I} G_{\langle i \rangle}$, for some $c = \vee, \wedge$, where: $I$ the set of initial moves of $G$, and $G_{\langle i \rangle}$ is the immediate sub-game of $G$ defined by the initial move $i$.

%%%%%%%%%%%%%%%%%%
%20:39 05/10/2016%
%%%%%%%%%%%%%%%%%%

\subsection{Discussing a game interpretation of second order quantification}
In this sub-section we informally outline our game interpretation for a second order quantification $\FORALL X.A$: in the next sub-section we will make it precise. 

%For every $1$-closed sequent $\Gamma \in L(\PA^2)$ we want to provide some game $G = \interpret{\Gamma} \in \setG$ interpreting it. $G$ is without parameters: it is a set-theoretical game. We want a sound and complete interpretation: $\Gamma$ is valid if and only if $G$ is $\Player$-winning. We are looking for an effective interpretation: whenever we have a winning strategy for $G$, we require to be able to find a primitive recursive winning strategy for $G$. As an intermediate step, we need an interpretation of any formula $A \in L(\PA^2)$ as some game $\interpret{A} \in \setP$ with parameters. The main problem we have is how to interpret second order quantifiers $\FORALL X.A, \EXISTS X.A \in L(\PA^2)$, with possibly free variables, in term of parametric games. 

Assume that we interpreted the predicate variable $X$ with some self-dual set of parameters $I$, and that we have a game $\interpret{A}$ interpreting $A$. We define a game $G$ interpreting $\FORALL X.A$ as follows. $\Player$ wins a play in the game $G$ if: 
\begin{enumerate}
\item
either $\Player$ wins $\interpret{A}$ independently from the assignment to the parameters in $I$, or
\item 
for some $a \in I$, $\Player$ is able to find two dual generic sub-games $\END(a)$ and $\END(a^\bot)$ in $\interpret{A}$, such that if $\Player$ wins $\END(a)$ then player wins $\interpret{A}$, and if $\Player$ wins $\END(a^\bot)$ then player wins $\interpret{A}$.
\end{enumerate}

In the second case, $\Player$ proved that for any assignment to the parameters in $I$ there exists a $\Player$-winning strategy for $\interpret{A}$, even if this $\Player$-winning strategy depends on the assignment and is not known by $\Player$. 
%This way of interpreting atomic formula has been introduced by Lorenzen (\cite{Lorenzen}, \S 1, Def. (D10)) for First Order Intuitionistic Logic and atomic formulas of the form $p(\vec{t})$, with $p$ predicate constant. Lorenzen only used it to interpret the whole sequent, we use it to interpret any sub-formula with a second order quantifier, which Lorenzen did not have. Lorenzen matched affirmed and negated occurrences of $p(\vec{t})$ (\cite{Lorenzen}, \S 1, Def. (D10)), we match $X(\vec{t})$ with $X^\bot(\vec{t})$, for $X$ predicate variable. 

Now we describe $G$ more in detail. Let $\vec{X} = X_1, \ldots, X_m$ be any list of predicate variables. We interpret the connectives $\FORALL X.A$ and $\EXISTS X.A$ of $L(\PA^2)$ as a particular case of the more general connectives $\FORALL \vec{X}.\Gamma$ and $\EXISTS \vec{X}.\Gamma$, with $\Gamma = (A_0, \ldots, A_{n-1})$ a sequent. $\FORALL \vec{X}.\Gamma$ and $\exists \vec{X}.\Gamma$ are more suitable to an effective interpretation, and their meaning is, respectively, $\FORALL \vec{X}.A_0 \vee \ldots \vee A_{n-1}$ and $\exists \vec{X}. A_0 \wedge \ldots \wedge A_{n-1}$. In game theoretical terms, we debate the truth of $\FORALL \vec{X}.(A_0, \ldots, A_n)$ by interleaving several ``local'' debates about the truth of $A_0, \ldots, A_{n-1}$, and considering ``unknown'' the truth value of each instance $X_i(\vec{c_n})$ of each $X_i$. Our goal is convincing our opponent that, no matter how we assign a list of predicates to $\vec{X}$, some $A_i$ is true. Sometimes we obtain this effect by finding two dual generic sub-games $\END(a)$ and $\END(a^\bot)$ in $A_0, \ldots, A_{n-1}$, with $a$ in the interpretation of $\vec{X}$, sometimes by finding some atomic formula which is true independently from the predicate assignment to $\vec{X}$. At each even step $\Player$ selects some $A_i$ and in the next step (which is an odd step) the debate continues from $A_i$. The move from $A_i$ to some $A'_i$ creates a new local debate about $A'_i$, but does not delete $A_0, \ldots, A_{n-1}$: at any moment, $\Player$ may move again from $A_i$. The fact that we may come back to $A_i$ is called ``backtracking'' by Coquand \cite{Coquand-1991}, \cite{Coquand-1995}.
%, provided a winning strategy new list gives a winning strategy for the old one.

We give a move precise description of how to define $G$. We call any $\Gamma = (G_0, \ldots, G_{n-1}) \in \setP^n$ a \emph{game sequent}, and we set $\Gamma^\bot = (G^\bot_0, \ldots, G^\bot_{n-1})$ and $\FV(\Gamma) = \FV(G_0) \cup \ldots \cup \FV(G_{n-1})$. Assume some self-dual set $I \subseteq \Par$ of parameters be given. $I$ interprets a list of predicate variables in $L(\PA^2)$: we call $I$ the set of \emph{bound parameters} of $G$. We want to define some game $G = \FORALLGAME I.\Gamma \in \setP$ with free parameters $\FV(G) = \FV(\Gamma) \setminus I$, interpreting universal quantification on predicates. Remark that $\FV(\Gamma) \setminus I \subseteq \Par$ is a self-dual set, because $\FV(\Gamma)$ and $I$ are self-dual. 

%$G$ should be a game with primitive recursive finitary part, if the same holds for $G_0, \ldots, G_{n-1}$. $G$ should be a sound, complete and effective interpretation of second order quantification with possibly free variables. Therefore we ask that for all $G$-assignments $\rho$ we have:
%
%\begin{enumerate}
%\item
%$\Player = \vince(\rho(G))$ if and only if for all assignments $\eta:I \rightarrow \{\Player, \Opponent\}$ there is some $i \in \Nat$, $i < n$ such that $\Player = \vince(\rho,\eta(G_i))$.
%\item
%If there is some $\Player$-winning strategy for $\rho(G)$, then there is some $\Player$-winning strategy for $\rho(G)$ which is primitive recursive in the finitary part of $\rho(G)$.
%\end{enumerate}

The game $G$ runs as follows. In any moves with even index $2a = 0, 2, 4, \ldots$, player $\Player$ moves, and selects some non-atomic game $G_i$. For a reason we explain in a moment, we name this move $\JUST(i,n)$. The player moving in the move of odd index $2a+1$ is defined as the first player of $G_i$. If $\Player$ is the first player in $G_i$, $\Player$ moves $m$ from the root of $G_i$, selecting the immediate sub-game $G_n = (G_{i})_{\langle m \rangle}$. If $\Opponent$ is the first player in $G_i$, then $\Opponent$ moves $m$ from the root of $G_i$, selecting the immediate sub-game $G_n = (G_{i})_{\langle m \rangle}$. In both cases the sub-game of $G$ we obtain is equal to $G' = \FORALLGAME I.(G_0, \ldots, G_{n-1}, G_n)$. We call each $G_i$ a local position of $G$,
%: in the move number 0, 2, 4, \ldots the player $\Player$ decides from which local position $G_i$ to create the new local position $G_n$. 
and we say that the local position number $i$ \emph{justifies} the existence of the local position number $n$. For this reason the move of index $2a$ by $\Player$, selecting the sub-formula $G_n$ of $G_i$, is called $\JUST(i,n)$. We call a ``local play'' the part of the play on $G$ running on some local position $G_i$.

There is the special move $\DROP(n)$ for $\Player$, with $n$ the length of the list $(G_0, \ldots, G_{n-1})$: if $\Player$ moves $\DROP(n)$ then $\Player$ ``gives up''. The goal of $\Player$ is to reach some sub-game $\FORALLGAME I.(G_0, \ldots, G_{m-1})$ in which, \emph{for any assignment to the parameters in $I$}, player $\Player$ wins some atomic game $G_i$. In a finite play, this goal is achieved by two more special moves, $\EM(i,j)$ or $\STOP(i)$. 

\begin{enumerate}
\item[$\EM(i,j)$]
$\Player$ chooses some $i, j \in \Nat$, $i, j < m$ in some sub-game such that $G_i = \END(a)$ and $G_j = \END(a^\bot)$ for some $a \in I$.
%some $a \in \Par$. 
No matter how we assign the parameters in $I$,
%in $\Par$, 
$\Player$ wins either in $G_i$ or in $G_j$. If $\Player$ finds such a configuration, then we say that $\Player$ wins $G$. We call this move $\EM(i,j)$.
\item[$\STOP(i)$]
$\Player$ chooses, in some sub-game, some $i \in \Nat$ with $i < n$, such that $G_i = \END(g)$ or $G_i = \END(a)$, for some $g \in \{\Player, \Opponent\}$ and some $a \in \FV(G)$. We call this move $\STOP(i)$: the game stops in $G_i$, the winner is decided by the label of the root of $G_i$ and, possibly, by the assignment $\rho$ on $\FV(G)$. 
\end{enumerate}

In the definition of $\STOP(i)$, we asked that $a \in \FV(G) = \FV(\Gamma) \setminus I$, hence that $a \not \in I$. The reason is that, in the case the play ends $G_i = \END(a)$ with a move $\STOP(i)$, the value assigned to the parameter $a$ decides the winner, and we do not want the winner of $G$ to depend on a bound parameter of $G$. The meaning of a quantifier does not depend on an assignment to its bound variable: you cannot assign a bound variable. We want that the same holds for the bound parameters of a game interpreting a quantification. 

%This is to obtain that $a \in \FV(\Gamma) \setminus I = \FV(G)$, that is, that the game $G_i = \END(a)$ does not depend on the assignment on parameters in $I$, but only on the assignment $\rho$ to $\FV(G)$. This is coherent with the idea that any parameter in $I$ is bound and does not contribute to the winner of $G$

In infinite plays, $\Player$ wins a play $p$ on $G = \FORALLGAME I. (G_0, \ldots, G_{n-1})$ if and only if, for some $i<n$, $\Player$ wins some sub-play $q$ of $p$ made by all moves of $p$ which are in $G_i$. 

%As in the case of a move $\EM(i,j)$, we think of the game sequent $(G_0, \ldots, G_{n-1})$ as representing a logical disjunction. This concludes the informal sketch of $G = \FORALLGAME I.\Gamma$: in the next-sub-section we formally define $G$.

\subsection{A game interpretation of second order quantification}
In this sub-section we formally define a set $\setM$ of moves and a game $\FORALLGAME I.\Gamma$ on $\setM$, interpreting second order quantification.

%We assume having the usual primitive recursive coding of lists. 
%We assume that we have primitive recursive maps $\lceil . \rceil: \List(\Nat) \rightarrow \Nat$, $\length: \Nat \rightarrow \Nat$, $\pi(.,.)$ such that if $i = \lceil i_0, \ldots, i_{n-1} \rceil$, then we have $\length(i) = n$ and $\pi(i,j) = i_j$ for all $j < n$, and $\pi(i,j) = 0$ for all $j \ge n$. 

The constructors of $\setM$ are $\DROP(n), \EM(n,m), \STOP(n), \JUST(n,m), \nth(n)$, with arguments any $n, m \in \Nat$. The constructors are all unary or binary, and we read them as follows. $\DROP(n)$: ``I drop the discussion of $n$ formulas''. $\EM(i,j)$: ``I use Excluded Middle on atomic formulas number $i$, $j$''. $\STOP(i)$: ``the atomic formula number $i$ gives the outcome of the play''. $\JUST(i,n)$: ``The non-atomic formula number $i$ justifies the formula number $n$''. For interpreting formulas with a propositional or first-order head symbol we add the unary constructor $\nth(n)$: ``I choose the immediate sub-formula number $n$''. 

We choose $\setM$ in such a way that the constructors are primitive recursive, have disjoint range, a primitive recursive inverse for each argument, and each move in $\setM$ is in the image of some constructor. This is possible. 

%One possible definition is: for all $i, j \in \Nat$, we set $\DROP(i) = \lceil 0, i \rceil$, $\EM(i,j) = \lceil 1, i, j \rceil$, $\STOP(i) = \lceil 2, i \rceil$ and $\JUST(i,j) = \lceil 3, i,j \rceil$ and $\nth(i) = \lceil 4, i \rceil$. The primitive recursive inverse for the arguments $1$ and $2$ are $\pi(.,1)$ and $\pi(.,2)$. For instance, we have $\pi(\EM(n_1,n_2)) = n_i$ and $\pi(\nth(n_1),1) = n_1$ for all $n_1, n_2 \in \Nat$, all $i=1,2$.

Let $G = \FORALLGAME I.(G_0, \ldots, G_{n-1})$. We translate the discussion of the previous section into an inductive definition of the set of $p \in |G|$, of the list of local positions $G_0, \ldots, G_{(n-1)+k}$ of $p$, extending $G_0, \ldots, G_{n-1}$, and of the justification relation between indexes of local positions.

\begin{definition}[The tree support of $G = \FORALLGAME I.\Gamma$] 
\mbox{}
\begin{enumerate}
\item
$\nil \in |G|$ and $\nil$ has local positions $G_0, \ldots, G_{n-1}$. 
\item
Assume that: $p = \langle \JUST(i_0,n), m_0, \ldots, \JUST(i_{k-1},n+k-1), m_{k-1} \rangle$ has length $2k$ and $p \in |G|$ and $p$ has local positions $G_0, \ldots, G_{n+k-1}$ and $m \in \setM$. Then $p_1 = p@m \in |G|$ if and only if one of these conditions holds:
\begin{enumerate}
\item
$m = \DROP(n+k)$.
\item
$m = \EM(i,j) \in |G|$ and $i, j < n+k$ and $G_i = \END(a)$, $G_j= \END(a^\bot)$ for some $a \in I$.
%some $a \in \Par$.
\item
$m = \STOP(i)$ and $i < n+k$ and either $G_i = \END(g)$ or $G_i = \END(a)$, for some $g \in \{\Player, \Opponent\}$ and some $a \in \FV(G)$ (hence $a \not \in I$).
\item
$m = \JUST(i_k,n+k)$ and $i_k < n+k$ and $G_{i_k}$ is \underline{not} atomic.
\end{enumerate}
The local positions of $p_1$ are those of $p$. 
\item
If $p_1 = p@\JUST(i_k,n+k) \in |G|$ then $p_2 = p_1@m_k \in |G|$ if and only if $m_k$ is a move from $G_{i_k}$. The local positions of $p_2$ are $G_0, \ldots, G_{n+k}$, with $G_{n+k} = (G_{i_k})_{\langle m_k \rangle}$. 
\end{enumerate}
For all $j \le k$, we say that \emph{$i_j$ justifies $n+j$ through $m_j$} and we write $ i_j \vdash_{m_j} n+j $. 
\end{definition}

We define now the maps $\turno_G$ and $\vince_G$ for $G = \FORALLGAME I.\Gamma$. 

\begin{definition}[Turn and winner for nodes of $G$]
Assume $p$ has even length: then $\turno_G(p) = \Player$. Take any $p @ \DROP(n+k), p @ \EM(i,j), p@ \STOP(i)$ and $p @ \JUST(i,n+k)$ of odd length in $|G|$. Then we set:
\begin{enumerate}
%\item
%$\turno_G(\nil) = \Player$. (\emph{$\Player$ is the first player})
\item
$\vince_G(p @ \DROP(n+k)) = \Opponent$
\item
$\vince_G(p@ \EM(i,j)) = \Player$
\item
$\vince_G(p @ \STOP(i)) = \vince_{G_i}(\nil)$. 
\item
$\turno_{G}(p @ \JUST(i,n+k)) = \turno_{G_i}(\nil)$ (\emph{on odd positions, the player moving is the first player of the local position $G_i$})
\end{enumerate}
We call any $a \in I$ a bound parameter of $G$.
\end{definition}

%The plays and sub-games are: $\nil$ and $G$ itself, and
%\begin{enumerate}
%\item
%$\langle \DROP \rangle$ and $\END(\Opponent)$
%\item
%$\langle \EM(i,j) \rangle$ and $\END(\Player)$, if there are $i, j \in \Nat$, $i,j<n$ such that $G_j = G^bot_i$ are atomic
%\item
%$\STOP(i)$ and $G_i$, for all $G_i$ such that $\FV(G_i) \cap I = \emptyset$ and $G_i$ is atomic
%\item
%$\JUST(i)$ and $\FORALLGAME I.(\Gamma,i)$, for all $i \in \Nat$, $i<n$ such that $G_i$ is \underline{not} an atomic tree.
%\end{enumerate}

%Assume $g$ selects $m_j$, the move number $j$ from $G_i$, and that $(G_{i})_{m}$ is the corresponding sub-tree of $G_i$. Then the next sub-game is $\FORALLGAME I.(G_0, \ldots, G_{n-1}, (G_{i})_{m_j})$. 

%We say that index $n$ justifies index $i$ through $j$, and we write $i \dashv_j n$. Assume $p$ is any play of $\FORALLGAME I.\Gamma$. We say that $G_0, \ldots, G_{n-1}$ and all $(G_i)_{m_j}$ arising at any step of $p$ are the local positions of $p$. 

In order to complete the definition of $G = \FORALLGAME I.\Gamma$ we have to define the set $G_g$ of infinite plays of $G$ won by $g \in \{\Player, \Opponent\}$.
We want to define $G_g$ in such a way that $\Player$ wins an infinite play in $G$ if and only if $p$ wins some local play in some $G_0, \ldots, G_{n-1}$. To this aim, we need a ``local play'' relation $q \prec_i p$ relating a play $q \in |G_i|_{\le \infty}$ with a play $p \in |G|_{\le \infty}$. 

We first extend the justification relation by reflexivity and transitivity, as follows. Assume $i < n$, $q = \langle m_0, \ldots, m_{n-1} \rangle \in |G_i|$, $j \in \Nat$. We define a relation $i \vdash_q j$ by: there are $i = i_0 < \ldots < i_n = j$ such that $i_0 \vdash_{m_0} i_1$, \ldots, $i_{n-1} \vdash_{m_{n-1}} i_{n}$. We say that $i$ justifies $j$ if $i \vdash_m j$ for some for some move $m$ from $G_i$. We say that $i$ remotely justifies $j$ if $i \vdash_q j$ for some $q \in |G_i|$. In this case we also say that $q$ is a local play of $p$ in $G_i$ and we write $q \prec_i p$. If $q = \langle m_0, \ldots, m_{n-1}, \ldots \rangle \in |G_i|_\infty$ we say that $q \prec_i p$ if there are infinitely many $i = i_0 < \ldots < i_n < \ldots$ such that $i = i_0 \vdash_{m_0} i_1$, \ldots, $i_{n-1} \vdash_{m_{n-1}} i_{n}$, \ldots. From the local play relation $\prec_i$ for $i<n$ we may define $G_{\Player}$, $G_{\Opponent}$. 

%%%%%%%%%%%%%%%%%%%%%%%%%%%%%
%L'OSSERVAZIONE SOTTO E' CORRETTA MA SEMBRA INUTILE 28/09/2016
%%%%%%%%%%%%%%%%%%%%%%%%%%%%%%%%%%%%%
%By induction on $j \in \Nat$ we may prove that for each local position $G_j$ of $p \in \FORALLGAME I.(G_0, \ldots, G_{n-1})$ we have $j \vdash_q i$ for exactly one $i < n$ and exactly one $q \in |G_i|$.
%%%%%%%%%%%%%%%%%%%%%%%%%%%%%
%L'OSSERVAZIONE SOPRA E' CORRETTA MA SEMBRA INUTILE
%%%%%%%%%%%%%%%%%%%%%%%%%%%%%%%%%%%%%

\begin{definition}[The partition $G_{\Player}$, $G_{\Opponent}$]
Let $G = \FORALLGAME I.(G_0, \ldots, G_{n-1})$.
\begin{enumerate}
\item
$G_{\Player}$ is the set of infinite plays $p \in |G|_{\infty}$ such that for some $i \in \Nat$, $i<n$, some infinite $q \prec_i p$ we have $q \in (G_i)_{\Player}$.
\item
$G_{\Opponent}$ is the set of infinite plays $p \in |G|_{\infty}$ such that for all $i \in \Nat$, $i<n$, all infinite $q \prec_i p$ we have $q \in (G_i)_{\Opponent}$.
\end{enumerate}
\end{definition}

This ends the definition of $\FORALLGAME I.\Gamma$. We define $\EXISTSGAME I.\Gamma$ as the \emph{dual game $(\FORALLGAME I.\Gamma^\bot)^\bot$}.
\\

The operator on games $\FORALLGAME I.\Gamma$ is quite different from the game operators we have for Tarski games. Assume that $G_0, \ldots, G_{n-1} \in \setG$. Then $\FV(G) = \emptyset$, therefore both $G$ and $G_0 \vee \ldots \vee G_{n-1}$ are in $\setG$. If $G_0, \ldots, G_{n-1}$ are determined, we will prove that both games have a $\Player$-winning strategy if and only if there is a $\Player$-winning strategy for some $G_i$. Thus, if $G_0, \ldots, G_{n-1} \in \setG$, then both $G$ and $G_0 \vee \ldots \vee G_{n-1}$ are an interpretation of the logical disjunction. Even in this case, $G$ and $G_0 \vee \ldots \vee G_{n-1}$ have very different features. Under the assumption $\FV(G) = \emptyset$, we will prove that $G$ is determined, and that $G$ has a primitive recursive $\Player$-winning strategy whenever it has a $\Player$-winning strategy and $G_0, \ldots, G_{n-1}$ have primitive recursive finitary part. Both properties fail for $G_0 \vee \ldots \vee G_{n-1}$, even if we assume that $G_0, \ldots, G_{n-1} \in \setG$, even if $n=1$.

\section{The language of $\PA^2$ and its notion of truth}
\label{section-pa}
In this section we define a language $L(\PA^2)$ for classical second order arithmetic $\PA^2$, with terms, formulas, predicates, one-sided sequents, and substitutions. Negations are pushed to atomic formulas and implications are defined from negations. 
Eventually, we define the canonical notion of validity and truth for formulas of $L(\PA^2)$, and our game semantics for $L(\PA^2)$. 

 \subsection{First order terms}
 We consider a set $\Term$ of first order terms, defined as follows. Let $\Nat$ denote the set of natural numbers. For every integer $k \in \Nat$, for every primitive recursive map $\phi:\Nat^k \rightarrow \Nat$ we assume having a function symbol $f$ denoting it. When $k = 0$, for every natural number $n \in \Nat$ we assume having a constant $c_n$ denoting it. We have infinitely many variables $x_0, x_1, x_2, \ldots$ denoting elements of $\Nat$. 

%We denote with $\Term_0$ the set of variable-free terms: any $t \in \Term_0$ denotes some $n \in \Nat$ we may effectively compute.

\subsection{Second order formulas, sequents and predicates}
 The set of terms of $L(\PA^2)$ is $\Term$. Let $\Bool = \{\true, \false\}$ be the set of booleans values. We define the dual operation $\true^\bot = \false$ and $\false^\bot = \true$. $(.)^\bot$ is involutory and without fixed point: we have $b^{\bot\bot} = b$ and $b \not = b^\bot$ for all $b \in \Bool$. If $f:I \rightarrow \Bool$ is any map, we define the dual $f^\bot$ of $f$ point-wise, by $f^\bot(i) = f(i)^\bot$ for all $i \in I$. If $f$ denotes a predicate on $I$, then $f^\bot$ denotes the complement (the negation) of the predicate. For all $k \in \Nat$, any primitive recursive $k$-ary predicate $\psi :\Nat^k \rightarrow \Bool$ we have in $L(\PA^2)$ two distinct symbols, $p$ (positive) and $p^{\bot}$ (negative), denoting $\psi$ and the dual predicate $\psi^\bot$ (the complement or negation of $p$). When $k=0$ there is a constant for truth, which we denote with $\True$. We write $\False$ for the dual constant $\True^{\bot}$. For all $k \in \Nat$ the language $L(\PA^2)$ has infinitely many $k$-ary predicate variables $V^k_0, V^k_1, \ldots, V^k_i, \ldots$. We usually drop the superscript $k$ and the index $i$, and we write $X, Y, Z, \ldots$ to denote a predicate variable. For every variable $X \in L(\PA^2)$, denoting a predicate $\psi : \Nat^k \rightarrow \Bool$, we add to $L(\PA^2)$ the negated variable $X^{\bot}$ denoting the complement $\psi^\bot$ of $\psi$. An atomic formula $A$ is $p(\vec{t})$ or $p^{\bot}(\vec{t})$ or $X(\vec{t})$ or $X^{\bot}(\vec{t})$, where $p$, $X$ are a constant and a predicate variable of arity $k \in \Nat$ and $\vec{t} = t_1, \ldots, t_k$ is a list of $k$ terms in $\Term$. 

We consider a negation- and implication-free language, where all negation are pushed to the atomic formulas, and represented by adding/removing superscript $(.)^\bot$ on the predicate symbols.

 %Let $X$ and $\vec{X} = X_1, \ldots$ be a predicate variable and a list of predicate variables. In $L(\PA^2)$ we have the connectives $\EXISTS X.A$ and $\FORALL X.A$. In $L_G$, we have the connectives 
%$
%\exists \vec{X}.(A_0, \ldots, A_n)$ 
%and
%$
%\FORALL \vec{X}.(A_0, \ldots, A_n)
%$
%whose meaning is: $\exists \vec{X}.A_0 \wedge \ldots \wedge A_n$ and $\FORALL \vec{X}. A_0 \vee \ldots \vee A_n$. In game theoretical terms, we debate the truth of $\FORALL \vec{X}.(A_0, \ldots, A_n)$ by interleaving several debates about the truth of $A_0, \ldots, A_n$, and considering the truth value of each instance $X_i(\vec{c_n})$ of each $X_i$ unknown. Our goal is convincing our opponent that no matter how we assign a list of predicates to $\vec{X}$, some $A_i$ is true. $\exists \vec{X}.A_0 \wedge \ldots \wedge A_n$ is the dual connective. 
%
 %$L_G$ is more suitable than $L(\PA^2)$ to a game interpretation. $L(\PA^2)$, $L_G$ have an obvious interpretation one into the other.
%
%Formulas and sequents of $L(\PA^2)$ are defined by simultaneous induction. 

\begin{definition}[Formulas of $L(\PA^2)$]
Assume $A,B \in L(\PA^2)$ are formulas, $x$ is any variable over $\Nat$, and $X$ is any predicate variable. 
\begin{enumerate}
\item
Any atomic formula is some formula of $L(\PA^2)$. 
\item
$A \wedge B, A \vee B, \forall x.A, \exists x.A, \FORALL X.A, \EXISTS X.A \in L(\PA^2)$ are formulas. 
\end{enumerate}
\end{definition}

%In $L_G$, we also define $\FORALL \vec{X}.() = \False$ and $\exists \vec{X}.() = \True$. We interpret $L(\PA^2)$ in $L_G$ by $\FORALL X.A = \FORALL X.(A)$ and $\EXISTS X.A = \EXISTS X.(A)$. From now on, unless otherwise specified, we consider the variant $L_G$ of $L(\PA^2)$.

A sequent of $L(\PA^2)$ is any list $\Gamma = (A_0, \ldots, A_{n-1})$ of formulas of $L(\PA^2)$. We consider the one-formula sequent $(A)$ distinct from $A$ itself.
We call any formula $A_1 \vee A_2, \exists x.A, \EXISTS X.A \in L(\PA^2)$ a disjunctive formula of $L(\PA^2)$. 
We call any formula $A_1 \wedge A_2, \forall x.A, \FORALL X.A \in L(\PA^2)$ a conjunctive formula of $L(\PA^2)$. 

\begin{definition}[Head and Order]
%and Degree]
\label{definition-head}
Let $A \in L(\PA^2)$ be any formula.
\begin{enumerate}
\item
The head of $A$ is the outermost symbol of $A$
\item
$A$ has order $0$ if $A$ has head some predicate variable $X$, $X^\bot$, or some predicate constant $p$, $p^\bot$, or some connective $\vee, \wedge$
\item
$A$ has order $1$ if $A$ has head some connective $\exists, \forall$
\item 
$A$ has order $2$ if $A$ has head some connective $\EXISTS, \FORALL$. 
%\item
%$\deg(A)$, the second order degree of $A$, \emph{degree} for short, is the longest chain in $A$ of nested connectives of the form $\EXISTS, \FORALL$.
\end{enumerate}
\end{definition}

 We define free and bound variables of $L(\PA^2)$ as usual. We denote by $\FV(A)$ the set of variables $x, X, X^\bot$ occurring free in $A$, or such that the dual of the variable occurs free in $A$. We denote with $\FV_1(A), \FV_2(A)$ the subsets of first order and second order variables in $\FV(A)$.  A formula $A \in L(\PA^2)$ is $1$-closed if $\FV_1(A) = \emptyset$. We define the substitution $[\vec{t}/\vec{x}]$ on integer variables as usual, using variable renaming in order to avoid variable capture. For any $k \in \Nat$, a $k$-ary predicate is any expression $\lambda \vec{x}.A$, for some list $x_1, \ldots, x_k$ of variables over $\Nat$ and some formula $A$. When $k = 0$, the $0$-ary predicates are exactly the formulas.

 \subsection{Negation, Implication and substitution}
There is no primitive negation over $L(\PA^2)$, but we define an involutory negation $A^\bot$ as follows. $A^\bot$ is obtained by switching in $A$: $X$ with $X^\bot$, $p$ with $p^\bot$, $\wedge$ with $\vee$, $\forall$ with $\exists$, $\FORALL$ with $\EXISTS$. As a consequence, $A^{\bot\bot} = A$ and $(.)^\bot$ has no fixed point: $A \not = A^\bot$ for all $A$. $A$ is atomic if and only if $A^\bot$ is atomic, and $A$ is conjunctive (disjunctive) if and only if $A^\bot$ is disjunctive (conjunctive). We define $(A \Rightarrow B) = (A^{\bot} \vee B)$.

%INUTILE INTRODURRE L'EQUIVALENZA
%and $(A \Leftarrow B) = (A\vee B^{\bot})$ and $(A \Leftrightarrow B) = ((A \Rightarrow B) \wedge (A \Leftarrow B))$. 

Assume $\vec{P} = P_1, \ldots, P_n$ is a list of predicates, with $P_i = \lambda \vec{x_i}.A_i$ of arity $k_i$, and $\vec{X} = X_1, \ldots, X_n$ is a list of predicate variables of the same arity. We define a substitution $[\vec{P}/\vec{X}]$ as follows. We set $X_i[\vec{P}/\vec{X}] = A_i[\vec{t}/\vec{x}]$ and $X^\bot_i[\vec{P}/\vec{X}] = A^\bot_i[\vec{t}/\vec{x}]$. If $A$ is atomic and $\FV(A) \cap \vec{X} = \emptyset$ then we set $A[\vec{P}/\vec{X}] = A$. We extend $[\vec{P}/\vec{X}]$ by compatibility with formula construction in $L(\PA^2)$, using variable renaming in order to avoid variable capture.

Assume we have involutory operations denoted $(.)^\bot$ on $I,J$. We say that a map $f:I \rightarrow J$ is \emph{dual-preserving} when $f(i^\bot) = f(i)^\bot$ for all $i \in I$. Substitution is an example of dual-preserving map on formulas $A \in L(\PA^2)$: by induction on $A$ we may prove that $A^\bot[\vec{P}/\vec{X}] =  A[\vec{P}/\vec{X}]^\bot$.

\subsection{Interpreting formulas of $L(\PA^2)$}
We formally describe the standard model-theoretical interpretation for a formula $A \in L(\PA^2)$, in which a first order variable $x$ has domain $\Nat$ and a $k$-ary second order predicate variable $X^k_n$ has domain \emph{all} subsets of $\Nat^k$. We assume having a dual-preserving map $[.]$, taking any $k$-ary function symbol $f \in L(\PA^2)$, any $k$-ary predicate positive symbol $p \in L(\PA^2)$, and returning the primitive recursive map $[f]:\Nat^k \rightarrow \Nat$ which is denoted by $f$, and the primitive recursive predicate $[p]:\Nat^k \rightarrow \Bool$ which is denoted by $p$. In the special case $k=0$, then $f = c_n$ is a constant denoting the natural number $[c_n] = n \in \Nat$, and $p$ is a constant predicate, denoting some boolean $[p] \in \Bool$. We assumed that $[.]$ is dual-preserving: this means that %$[f^\bot] = [f]^\bot$ and 
$[p^\bot] = [p]^\bot$.

An $A$-environment $\theta$ is any dual-preserving map $\theta:\dom(\theta) \rightarrow \Nat \bigcup (\cup_{k \in \Nat} (\Nat^k \rightarrow \Bool))$, such that $\FV(A) \subseteq \dom(\theta)$ and $\theta(x) \in \Nat$ for all $x \in \FV_1(A)$ and $\theta(X) : \Nat^k \rightarrow \Bool$ for all $X \in \FV_2(A)$ with arity $k$. If $A$ is closed then the empty map $\emptyset$ is an $A$-environment. We write $\setE(A)$ for the set of $A$-environments and $\setE$ for the set of environments. If $\rho_1$, $\rho_2$ are two environments, we define the over-writing $\rho_1, \rho_2$ of $\rho_1$ with $\rho_2$ by $(\rho_1,\rho_2)(x) = \rho_2(x)$ if $x \in \dom(\rho_2)$, and $(\rho_1,\rho_2)(x) = \rho_1(x)$ if $x \in \dom(\rho_1) \setminus \dom(\rho_2)$. 

Given any $t = x, f(\vec{t}) \in \Term$, with $\vec{t} = t_1, \ldots, t_n$, any environment $\theta$ with $\FV(t) \subseteq \dom(\theta)$, we recursively define $[t]_{\theta} \in \Nat$ and $[\vec{t}]_{\theta} \in \Nat^n$ by: $[x]_{\theta} = \theta(x)\in \Nat$ and  
$[f(\vec{t})]_{\theta} = 
[f]( [\vec{t}]_{\theta} )\in \Nat$ and 
$[\vec{t}]_{\theta} =  [t_1]_{\theta}, \ldots, [t_n]_{\theta}\in \Nat^n$.
We may now define the dual-preserving interpretation $[A]_{\theta} \in \Bool$ of $A$, given any $\theta \in \setE(A)$.

\begin{definition}[Interpretation of a formula of $L(\PA^2)$]
\mbox{} Let $X$ be any predicate variable, and $x$ be any variable on $\Nat$. Assume $A \in L(\PA^2)$ is any formula and $\theta  \in \setE(A)$ is any environment for $A$. 
We define $[A]_{\theta} \in \Bool$ by induction on $A$. 
\begin{enumerate}
\item
$[s(\vec{t})]_{\theta} 
= [s]([\vec{t}]_{\theta}) \in \Bool$ if $s = p, p^\bot$ (recall that $[.]$ is dual-preserving)
\item
$[s(\vec{t})]_{\theta} 
= \theta(s)([\vec{t}]_{\theta}) \in \Bool$
if $s = X_i, X_i^\bot$ (recall that $\theta$ is dual-preserving)
\item
$[A_1 \wedge A_2]_{\theta}  = \true$ 
if and only if $[A_i]_{\theta} = \true$ for all $i \in \{1,2\}$. 
\item
$[\forall x.A]_{\theta} = \true$ 
if and only if $[A[c_n/x]]_{\theta} = \true$ for all $n \in \Nat$
\item
$ [\FORALL X.A]_{\theta} = \true$ if and only if for all $\eta \in \setE$ with $\dom(\eta) = \{X\}$ we have $[A]_{\rho,\eta} = \true$. 
\item
If $A$ is disjunctive then we set: $[A]_{\theta} = [A^\bot]^\bot_{\theta}$. 
\end{enumerate}
\end{definition}

We check that $[.]_\theta$ is a dual-preserving map, from any $A \in L(\PA^2)$ such that $\FV(A) \subseteq \dom(\theta)$ into $\Bool$. Indeed, if $A$ is atomic then $[A]^\bot_{\theta} = [A^\bot]_{\theta}$ because $[.]$ and $\theta$ are dual-preserving. If $A$ is disjunctive, then $[A]_{\theta} = [A^\bot]^\bot_{\theta}$ by definition, therefore $[A]^\bot_{\theta} = [A^\bot]^{\bot\bot}_{\theta} = [A^\bot]_{\theta}$. If $A$ is conjunctive, then $A^\bot$ is disjunctive with dual $A^{\bot\bot} = A$, and by definition we have $[A^\bot]_{\theta} = [A]^\bot_{\theta}$.

Let $A \in L(\PA^2)$. We say that $A$ is \emph{valid} if $[A]_\theta = \true$ for all $\theta \in \setE(A)$. Assume $A$ is closed. Then $\emptyset \in \setE(A)$, and we say that $A$ is \emph{true} if $[A]_{\emptyset} = \true$.

\subsection{A game interpretation for $\PA^2$}
In this subsection we define a game interpretation $\Gamma \in L(\PA^2) \mapsto \interpret{\Gamma} \in \setG$ for second order sequents into set-theoretical games, without parameters: parametric games are used only as an intermediate step in the definition. We will prove that this interpretation is sound, complete and effective. 

We take as set $\Par$ of parameters the set of atomic formulas $X(\vec{c_n}), X^\bot(\vec{c_n})$ of $L(\PA^2)$ having head some variable and arguments a list of constants: $\Par$ is equipped with some map $(.)^\bot$, involutory and without fixed point. 
%We start defining a map from the atomic formulas of $L(\PA^2)$ with head a variable to the set $\Par$ of parameters, and from predicate variables to self-dual sets of parameters. 
An atomic formula $X(\vec{c_n}) \in L(\PA^2)$ is interpreted by the parametric game $\END(a)$, where the parameter $a$ is the formula $X(\vec{c_n})$ itself. 

%Let us denote with $X^k_i$ the $k$-ary predicate variable number $i$. Assume $\vec{t}$ is any list of $k$ closed terms, of value $\vec{h} \in \Nat^k$. For any variable-atomic formula $A = X^k_i(\vec{t}), X^k_i(\vec{t})^\bot$, if $j = \langle k,i,\vec{h}, \rangle$ we define the coding $\lceil A \rceil \in \Par$ of it as, respectively, $\langle j, + \rangle$ and $\langle j, - \rangle$. 

We interpret a $k$-ary variable $X$ by the self-dual set of parameters $\Var(X) = \{X(\vec{c_h}), X^\bot(\vec{c_h}) \rceil | \vec{h} \in \Nat^k\}$. We interpret $\vec{X} = X_1, \ldots, X_n$ by the self-dual set $\Var(\vec{X}) = \Var(X_1) \cup \ldots \cup \Var(X_n)$.

We interpret formulas with head symbol $\vee$, $\wedge$ as binary disjunctions, conjunctions with index set $\nth(\{1,2 \})$. We interpret formulas with first order head symbol $\exists$, $\forall$ as  disjunctions, conjunctions with index set $\nth(\Nat)$. $\nth(n)$ denotes the move selecting the immediate sub-formula number $n$, for $n \in \Nat$. We assume that $\pi(.,1)$ is the primitive recursive inverse of $\nth(.)$, that is, that if $i = \nth(n)$ then $n = \pi(i,1)$. 

Given $G_1, G_2 \in \setP$ we define $G_1 \vee G_2 \in \setP$ as $\vee_{i \in \nth(\{1,2\})} G_{\pi(i,1)}$ and $G_1 \wedge G_2$ as $\wedge_{i \in \{1,2\}} G_{\pi(i,1)}$. Both $G_1 \vee G_2$ and $G_1 \wedge G_2$ have immediate sub-games $G_1$, $G_2$. The move selecting the sub-game $G_i$ is, as we anticipated, the move $m = \nth(i)$, because $\pi(m,1) = \pi(\nth(i),1) = i$, therefore $G_{\pi(m,1)} = G_{i}$.

%A notation remark: both $G_1 \wedge G_2 = \wedge_{i \in \nth(\{1,2\})} G_{\pi(i,1)}$ and $\wedge_{i \in \{1,2\}} G_{i}$ have immediate sub-games $G_1$, $G_2$. The difference is that in $G_1 \wedge G_2$ the moves selecting $G_1$, $G_2$ are, respectively, $\nth(1)$, $\nth(2)$, while in $\wedge_{i \in \{1,2\}} G_{i}$ are $1, 2$. The same remark applies to  $G_1 \vee G_2$ and $\vee_{i \in \{1,2\}} G_{i}$. 

We interpret first order quantifiers in the same way. Given $\{G_n | n \in \Nat\} \subseteq \setP$ we define $\exists n \in \Nat. G_n \in \setP$ as $\vee_{i \in \nth(\Nat)} G_{\pi(i,1)}$ and $\forall n \in \Nat. G_n \in \setP$ as $\wedge_{i \in \nth(\Nat)} G_{\pi(i,1)}$. Both $\exists n \in \Nat. G_n$ and $\forall n \in \Nat. G_n$ have immediate sub-games all $G_n$. As in the previous case, the move selecting the sub-game $G_n$ is $m = \nth(n)$.

%The games $\forall n \in \Nat. G_n = \wedge_{i \in \nth(\Nat)} G_{\pi(i,1)}$ and $\wedge_{n \in \Nat} G_n$ both have $n$-th immediate sub-game $G_n$, selected respectively by the move $\nth(n)$, by the move $n$. The same remark applies to $\exists n \in \Nat. G_n$ and $\vee_{n \in \Nat} G_n$.

We define $(G \Rightarrow H) = (G^{\bot} \vee H)$. 
%and $(G \Leftarrow H) = (G\vee H^{\bot})$ and $(G \Leftrightarrow H) = ((G \Rightarrow H) \wedge (G \Leftarrow H))$. 
We may now interpret all formulas of $\PA^2$ by parametric games, and all sequents by games without parameters.

\begin{definition}[Game Interpretation of $L(\PA^2)$]
Assume $A \in L(\PA^2)$ is any formula and $\Gamma = A_0, \ldots, A_{n-1}\in L(\PA^2)$ is any sequent and $g \in \{\Player, \Opponent\}$. Let $\vec{t}$ be a list of closed terms and $\vec{h}$ be the list of values of $\vec{t}$ and $\rho$ be any $\interpret{A}$-assignment.
\begin{enumerate}
\item
If $A = X^k_i(\vec{t}), X^k_i(\vec{t})^\bot$ and $a = X^k_i(\vec{c_h}), X^k_i(\vec{c_h})^\bot$ then $\interpret{A} = \END(a)$.
\item
If $A = p(\vec{t}), p(\vec{t})^\bot$ then
$\interpret{A} = \END(g)$, with $g = \Player$ if and only if $[p(\vec{t})] = \true$.
\item
If $A = A_1 \wedge A_2$ then
$\interpret{A} = \interpret{A_1} \wedge \interpret{A_2}$
\item
If $A = \forall x.B$ then
$\interpret{A} = \forall n \in \Nat.\interpret{B[c_{n}/x]}$
\item
If $A = \FORALL X.B$  and $I = \Var(X) \subseteq \Par$ then
$\interpret{A} = \FORALLGAME I.(\interpret{B})$
\item
If $I = \Var(\FV(\Gamma)) \subseteq \Par$ then
$\interpret{\Gamma} = \FORALLGAME I.(\interpret{A_0}, \ldots, \interpret{A_{n-1}}) \in \setG$.
\item
If $A$ is disjunctive then
$\interpret{A} = \interpret{A^\bot}^\bot$
\end{enumerate}
If $ \rho(\interpret{A}) $ is a determined game, we set $\interpret{A}_\rho = \vince(\rho(\interpret{A}))$.
\end{definition}

In order to establish a correspondence between standard semantics and game semantics of $\PA^2$, we need a correspondence between assignments on $A \in L(\PA^2)$ (maps with codomain $\Bool$) and on $\interpret{A} \in \setP$ (maps with codomain $\{\Player, \Opponent\}$). The correspondence replaces $\Player$ with $\true$ and $\Opponent$ with $\false$. Let $G = \interpret{A}$ and $\theta \in \setE(A)$. We say that $\rho \in \setE(\interpret{A})$ corresponds to $\theta$ if we have: $\rho(a) = \Player$ if and only if $\theta(X)(\vec{n}) = \true$ for all $X \in \FV(A)$ $k$-ary variable, for all $\vec{n} \in \Nat^k$ and for $a = X(\vec{c_n}) \in \Par$. For any $\theta \in \setE(A)$ we may define some $\rho \in \setE(\interpret{A})$ corresponding to it, and for any $\rho \in \setE(\interpret{A})$ we may define some $\theta \in \setE(A)$ corresponding to it.

\section{Effective Soundness and Completeness for second order game semantics}
\label{section-completeness}
In this section we first define exhaustive strategies for $G = \FORALLGAME I.\Gamma$ when $\FV(G) = \emptyset$, that is, when $\FV(\Gamma) \subseteq I$. Then, using exhaustive strategies, we prove that $\FORALLGAME I.\Gamma$ is a sound, complete and effective interpretation of second order quantifier $\FORALL$. Eventually we prove our main result: our second order game semantics is sound, complete and effective for the language $L(\PA^2)$.

The notion of exhaustive strategy is built on the top of the notions of counter-strategies and of non-repeating strategies.

\subsection{Counter-strategies and of non-repeating strategies}
Assume $\Gamma = G_0, \ldots, G_{n-1}$ and $G = \FORALLGAME I.\Gamma$. Then any play $p\in G_{\infty}$ defines a strategy $\tau_i$ which $\Opponent$ follows in the local plays on $G_i$ in $p$. $\tau_i$ consists of all local plays on $G_i$ in $p$: we call $\tau_i$ the \emph{$p$-counter-strategy on $G_i$}. 
 
\begin{definition}[Counter-strategies and non-repeating strategies]
\label{definition-counter-strategy}
\mbox{ }
Let $G = \FORALLGAME I.\Gamma$, $p \in |G|_{\le \infty}$, $m \in \setM$ and $\sigma:G$. 
\begin{enumerate}
\item
The \emph{$p$-counter-strategy on $G_i$} is $\tau_i = \{q \in |G_i| | q \prec_i p\}$. 
\item
$m$ is \emph{not repeated} in $p$ from $G_i$ if for all indexes $j,k$ of local positions of $p$, if $i \vdash_m j$ and $i \vdash_m k$ then $j = k$ \emph{($\Player$ never repeats the move $m$ from $G_i$)}. 
\item
$p$ is \emph{non-repeating} if all $m \in \setM$ are not repeated in $p$ from any $G_i$.
\item
$\sigma$ is \emph{non-repeating}  if all $p \in \sigma$ are non-repeating. 
\end{enumerate}
\end{definition}

The player $\Player$ may repeat the same move $m$ from $G_i$ if he hopes that the second time his opponent will react differently. However, to repeat a move is in principle superfluous: we will prove that all $\Player$-winning strategies may be made non-repeating. Given a non-repeating play, we prove that any infinite play whose proper prefixes are local plays is a local play.

\begin{lemma}[Non-repeating plays]
\label{lemma-non-repeating}
Assume $G = \FORALLGAME I.\Gamma$ and $p \in |G|_{\le \infty}$ is non-repeating and $G_i$ is a local position of $G$. Let $q \in |G_i|_\infty$ and $\tau_i = \{q \in |G_i| | q \prec_i p\}$ the $p$-counter-strategy on $G_i$.
\begin{enumerate}
\item
If $r \prec_i p$ for all $r<q$, then $q \prec_i p$.
\item
If $q \in (\tau_i)_{\infty}$ then  $q \prec_i p$. 
\end{enumerate}
\end{lemma}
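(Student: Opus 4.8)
The plan is to prove statement~(1) directly and then read off statement~(2) as a special case. For~(1), write $q = \langle m_0, m_1, m_2, \ldots \rangle$ and, for each $k \in \Nat$, let $q_k = \langle m_0, \ldots, m_{k-1} \rangle$ be its length-$k$ prefix. Since $q$ is infinite, each $q_k < q$, so by hypothesis $q_k \prec_i p$; unfolding the definition of $\prec_i$ this yields, for every $k$, a strictly increasing chain of indices of local positions of $p$,
\[
 i = i_0^{(k)} < i_1^{(k)} < \cdots < i_k^{(k)}, \qquad i_l^{(k)} \vdash_{m_l} i_{l+1}^{(k)} \ \text{ for all } l < k .
\]
The target $q \prec_i p$ asks for exactly one infinite such chain $i = i_0 < i_1 < i_2 < \cdots$ with $i_l \vdash_{m_l} i_{l+1}$ for all $l$, so the job is to glue the finite witnessing chains into a single infinite one.

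The only place the non-repetition hypothesis enters — and the heart of the argument — is the resulting \emph{uniqueness of justification}: since $p$ is non-repeating, for every index $a$ of a local position and every move $m$ there is at most one $j$ with $a \vdash_m j$. I would use this to show by induction on $l$ that $i_l^{(k)}$ does not depend on $k$, for all $k \ge l$: the case $l = 0$ is $i_0^{(k)} = i$, and in the step, if $i_l^{(k)} = i_l^{(k')} =: a$ then $a \vdash_{m_l} i_{l+1}^{(k)}$ and $a \vdash_{m_l} i_{l+1}^{(k')}$, so $i_{l+1}^{(k)} = i_{l+1}^{(k')}$ by non-repetition. Hence $i_l := i_l^{(k)}$ (any $k \ge l$) is well defined; taking $k = l+1$ gives $i_l \vdash_{m_l} i_{l+1}$, and strict monotonicity of $l \mapsto i_l$ is inherited from the finite chains, so this infinite chain witnesses $q \prec_i p$. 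Equivalently: non-repetition makes the tree of partial witnessing chains have out-degree $\le 1$, hence it is a single branch, which must be infinite because it has arbitrarily long initial segments.

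For~(2), I would first note that $\tau_i$ is closed under prefixes (truncate a witnessing chain) and contains $\nil$, hence is a tree support, so $(\tau_i)_\infty$ is its set of infinite branches; then $q \in (\tau_i)_\infty$ says precisely that every $r < q$ lies in $\tau_i$, i.e.\ $r \prec_i p$, which is exactly the hypothesis of~(1). So~(2) follows at once. The only bookkeeping I anticipate is checking that ``$m_l$ is a move from $G_{i_l}$'', implicit in $i_l \vdash_{m_l} i_{l+1}$, is compatible with $q$ being a branch of $G_i$; this is an easy induction giving $G_{i_l} = (G_i)_{q_l}$, so $m_l$ is legal from $G_{i_l}$ precisely because $q_{l+1} \in |G_i|$. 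Beyond the uniqueness-of-justification step there is no real obstacle here: once one sees that ``non-repeating'' is exactly ``at most one justified child'', the statement is essentially forced.
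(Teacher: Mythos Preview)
Your proposal is correct and follows essentially the same approach as the paper: both use non-repetition to obtain uniqueness of the justification chain witnessing each $q_k \prec_i p$, then observe that these unique finite chains cohere into a single infinite chain witnessing $q \prec_i p$, with part~(2) deduced immediately from~(1). Your write-up is somewhat more explicit about the coherence step (indexing the chains by $k$ and proving $i_l^{(k)}$ is independent of $k$), but the underlying argument is the same.
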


\begin{proof}
Assume $p$ is non-repeating in $G$.
\begin{enumerate}
\item 
By induction on the local play $r = \langle m_0, \ldots, m_{n-1} \rangle \prec_i p$, we may prove that $i \vdash_r j$ and $i \vdash_r k$ imply $j = k$, and that there is a unique sequence $i = i_0 < \ldots < i_n = j = k$ such that $i_0 \vdash_{m_0} i_1$, \ldots, $i_{n-1} \vdash_{m_{n-1}} i_{n}$.  As a corollary, any infinite play $q = \langle m_0, \ldots, m_{n-1}, \ldots \rangle \in |G_i|_\infty$ such that $r \prec_i p$ for all $r < q$ defines a unique infinite chain $i_0 \vdash_{m_0} i_1$, \ldots, $i_{n-1} \vdash_{m_{n-1}} i_{n}$, \ldots. By definition of $\prec_i$ we conclude that $q \prec_i p$.
\item
If $q \in (\tau_i)_{\infty}$ then for all $r < q$ we have $r \prec_i p$ by definition of $\tau_i$. By point $1$ above we conclude $q \prec_i p$. 
\end{enumerate}
\end{proof}

Without the assumption assumption ``$p$ non-repeating'' the result above fails.

\subsection{The exhaustive strategies for the game $\FORALLGAME I.\Gamma$}
 Let $G = \FORALLGAME I.\Gamma$, with $\Gamma = G_0, \ldots, G_{n-1} \in \setP^n$ and $\FV(G) = \emptyset$. 
In this sub-section we define a set of strategies for $G$ we call exhaustive strategies. Exhaustive strategies are the worst-case among the $\Player$-winning strategies: the are $\Player$-winning whenever some $\Player$-winning strategy exists, they are slow and clumsy and they just try to win in all possible ways. In the next sub-section we will define one particular exhaustive strategy, $\EXH$, which is primitive recursive w.r.t. the finitary parts of $G_0, \ldots, G_{n-1}$. The existence of $\EXH$ will be essential in order to prove soundness and completeness for our game semantics in an effective way. 

We first define the set of plays in $G$ in which we cannot move $\EM$, and the $G$-assignments for them which are the worst possible for player $\Player$. 

\begin{definition}[The worst possible assignments]
\label{definition-worst}
Let $G = \FORALLGAME I. \Gamma$, with $\Gamma = G_0, \ldots, G_{n-1}$ and $\FV(G) = \emptyset$. 
\begin{enumerate}
\item A (possibly infinite) play $p \in |G|_{\le \infty}$  is $\EM$-forbidding if we cannot move $\EM$ in any finite prefix of $p$: if $a \in I$ and $G_i = \END(a)$ for some $i \in \Nat$ and some local position $G_i$, then $G_j \not = \END(a^\bot)$ for all $j \in \Nat$ and all local positions $G_i$. 

%Let $a \in \Par$. We say that $a$ has index $i$ in $p$ if $\END(a)$ is the local position number $i$ of $p$. 
\item
Assume $p \in |G|_{\le \infty}$ is $\EM$-forbidding. Then the \emph{worst-case $p$-assignment} $\eta:\FV(\Gamma) \rightarrow \{\Player, \Opponent\}$ for a (possibly infinite) play $p \in |G|_{\le \infty}$ is defined as follows, for all $a \in I$. Assume $\{G_i | i \in \Nat\}$ is the set of local positions of $p$.
\begin{enumerate}
\item
We set $\eta(a) = \Opponent$ and $\eta(a^\bot) = \Player$ if $G_i = \END(a)$ for some $i \in \Nat$. 
\item
If neither $a$ nor $a^\bot$ have indexes in $p$, we arbitrarily decide for each pair $\{a, a^\bot\}$ whether we assign $\eta(a) = \Player$ and $\eta(a) = \Opponent$, or the other way round.
\end{enumerate}
\end{enumerate}
\end{definition}

The definition of worst-case $p$-assignment is correct because we assumed that if $G_i = \END(a)$ for some $i \in \Nat$, then $G_j \not = \END(a)$ for all $j \in \Nat$. By construction we have $\eta(a)^\bot = \eta(a^\bot)$ for all $a \in \FV(\Gamma)$: $\eta$ is an assignment. By construction we have $\eta(a) = \Opponent$ for all $\END(a)$ which are local positions of $p$. The worst-case $p$-assignment is an arbitrary choice for a $\FV(\Gamma)$-assignment in which $\Opponent$ wins all local positions which are ``generic'' games, i.e., of the form some $\END(a)$.
 
%Assume $\FV(G) = \emptyset$, that is, $\FV(\Gamma) \subseteq I$. If $p$ is $\EM$-forbidding and $a \in \Par$ has index $i$ in $p$ then $a^\bot$ has no index $j$ in $p$, otherwise we would have $G_i = \END(a)$ and $G^\bot_j = \END(a^\bot) = \END(a)^\bot$ and $a \in \FV(\Gamma) \subseteq I$, against the definition of $\EM$-forbidding. In this case, the worst-case assignment $\eta$ for $p$ satisfies: $\eta(a) = \Opponent$ for all $a \in \Par$ having some index $i$ in $p$.

%$\sigma:\FORALLGAME I.\Gamma$ is a $\Player$-exhaustive strategy if ``$\Player$ tries to win in $G$ in all possible ways''. 

We informally outline the definition of ``$\sigma$ is an exhaustive strategy''. 
%In the case $\Gamma$ is $\Player$-immediately losing, then $\Player$ always loses in $1$ move, and we set no condition at all on $\sigma$. If $\Gamma$ is not $\Player$-immediately losing, then 
$\sigma$ is exhaustive if $\sigma$ is non-repeating and $\sigma$ ``tries all possible ways for $\Player$ to win''. This latter request is expressed as follows: for all maximal plays $p \in \sigma$ 
\begin{enumerate}
\item
%From any local position $G_i$, if $\Opponent$ moves first and there are moves from $G_i$ then eventually $\Player$ wins or $\sigma$ includes one local play with a move from $G_i$, and if $\Player$ moves first then eventually $\Player$ wins or $\sigma $ includes one local play for each move from $G_i$.
From any local position $G_i$ of $p$, if $\Opponent$ moves first and there are moves from $G_i$ then eventually $\Player$ wins or $p$ includes one local play with a move from $G_i$. If $\Player$ moves first then eventually $\Player$ wins or $p $ includes one local play for each possible move from $G_i$.
\item
If some move $\EM$ is possible, , then eventually $p$ contains one winning move and $p$ is terminated and won by $\Player$. 
\item
If some winning $\STOP$ move is possible, that is, if $\Player$ wins in some finite local play $q$ of $p$, then eventually $p$ contains one winning move and the play is terminated and won by $\Player$. 
\end{enumerate}
We express conditions $1,2,3$ by taking the contrapositive. We ask that for all plays $p$ which are terminated by some $\DROP$ move or are infinite we have: \emph{(1)} for any local play $q$, $p$ includes one extension by $\Opponent$ or all possible extensions by $\Player$; \emph{(2)}  no $\EM$-move is possible; \emph{(3)}  in all local plays of $p$, $\Opponent$ always wins for some assignment the local play if the local play is terminated. The formal definition of an exhaustive strategy runs as follows.

%The last condition, as we will see later, forces $\Opponent$ to provide an $\Opponent$-winning strategy for some $\eta$ and all $\eta(G_i)$ in the case the exhaustive strategy is not winning.

%%%%%%%%%%%%%%%%%%
%20:07 07/10/2016%
%%%%%%%%%%%%%%%%%%

\begin{definition}[Exhaustive strategies]
\label{definition-exhaustive}
Let $G = \FORALLGAME I.\Gamma$ and $\FV(G) = \emptyset$. Then $\sigma:G$ is \emph{exhaustive} if 
%either $G$ is $\Player$-immediately losing, or 
$\sigma$ is non-repeating, and for all $p \in \sigma$ ending by $\DROP$ or infinite, if $\eta$ is the worst-case $p$-assignment, $i<n$ and $\tau_i = \{q \in |G_i| | q \prec_i p\}$ is the $p$-counter-strategy in $G_i$, then 
\begin{enumerate}
\item
$\tau_i$ is $\Opponent$-total for $G_i$

%For all $q \prec_i p$, if $\Opponent$ moves from $q \in |G_i|$ then for some $q @ m \in |G_i|$ we have $q @ m \prec_i p$, if $\Player$ moves from $q$ in $G_i$ then for all $q @ m \in |G_i|$ we have $q @ m \prec_i p$. 
\item
$p$ is $\EM$-forbidding. 

\item
$\tau_i$ is $\Opponent$-partially winning for $\eta(G_i)$.

\end{enumerate}
\end{definition}

 Remark that \ref{definition-exhaustive}.3 $\implies$ \ref{definition-exhaustive}.1. Indeed, if $\tau_i$ is $\Opponent$-partially winning for $\eta(G_i)$ then it is $\Opponent$-total for $\eta(G_i)$ and it is $\Opponent$-total for $G_i$, because the nodes with children of $G_i$ and $\eta(G_i)$ are the same. \ref{definition-exhaustive}.1 is logically superfluous, but we inserted it to emphasize that each $\tau_i$ is $\Opponent$-total for $G_i$. We prove the existence some exhaustive strategy by defining the exhaustive strategy $\EXH$. $\EXH$ is primitive recursive in $G_0, \ldots, G_{n-1}$, and it is defined by precising in which order $\Player$ selects the next available non-repeating move.

\subsubsection{A sketch of the exhaustive strategy $\EXH$} 
$\EXH$ considers all local positions $G_i$ of $G$, for $i = 0, \ldots, n-1$, in this order. If $G_i$ is atomic, then $\EXH$ makes a move $\EM(i,j)$ or $\STOP(i)$ whenever this move is possible and it is winning, otherwise $\EXH$ skips the local position $G_i$. If $G_i$ is not atomic, and $\Opponent$ is the first player on $G_i$, and no move from $G_i$ was done before, then $\EXH$ asks $\Opponent$ to make a move from $G_i$, otherwise $\EXH$ skips $G_i$. If $\Player$ is the first player on $G_i$, then $\EXH$ makes the first move from $G_i$ which was not done before, otherwise $\EXH$ skips $G_i$. If $\EXH$ looks through all $G_i$, and finds no winning move, the state of the game is some $G' = \FORALLGAME I.\Gamma, G_n, \ldots, G_{n+h-1}$, with $h \in \Nat$ new local positions, and $2h$ new moves. In the case $h=0$, then no winning move is available, otherwise the play would stop before, and no non-repeated move is available. In this case $\EXH$ plays $\DROP(n)$. In the case $h > 0$, then there are new local positions, and $\EXH$ start again from $G'$. $\EXH$ cyclically repeat all these steps, forever, unless $\EXH$ either wins or loses in finite time.

In order to formally describe $\EXH$, we introduce a strategy state $(i,p,q, \vdash)$ associated to each play $r$ in $\EXH$ of length $2h$ for some $h \in \Nat$. We ask that $i \le p \le q$. $i$ is the index of local position $G_i$ which $\EXH$ will consider next. $G_0, \ldots, G_{p-1}$ is the list of local positions which are considered in the current cycle. $G_r = \FORALLGAME I. (G_0, \ldots, G_{q-1})$ is some sub-game $G_r$ of $G$, which is the result of the $2h$ moves of the play $r$, and we have $q = n + h$. $\vdash$ is the list of triples $(i, m, j)$ of the justification relation of $r$. Any sequence of consecutive states $(0,p,q_0, \vdash_0)$, $(1,p,q_1, \vdash_1)$, \ldots, $(p,p,q_p, \vdash_p)$ is called a \emph{cycle}, and any proper non-empty prefix of it an \emph{incomplete cycle}. 

The initial state of $\EXH$ is equal to $(0,n,n,\emptyset)$, and corresponds to the empty play $\nil$. In the initial state, $\EXH$ is considering $G_0$, the current cycle considers $G_0, \ldots, G_{n-1}$, the current sub-game is $G_\nil = G$, and the justification relation is empty. For $i = 0, \ldots, n-1$, $\EXH$ decides whether to move from $G_i$ or not. Eventually, $i$ becomes $n$ and the new state is $(n,n,n+h,\vdash)$, with $2h$ the total number of moves considered by $\EXH$. If $h=0$ no new local position has been inserted, that is, no move is possible but $\DROP$, and $\EXH$ drops out. Otherwise $\EXH$ assigns the value $0$ to $i$ and the new state is $(0,n+h,n+h,\vdash)$. $\EXH$ produces a sequence of cycles: $(0,n_0,n_0,\vdash_0)$, \ldots, $(0,n_1,n_1,\vdash_1)$, \ldots,$(0,n_2,n_2,\vdash_2), \ldots$, with $\vdash_0 = \emptyset$ and $n = n_0 < n_1 < n_2 < \ldots$. The last cycle may be incomplete. The sequence stops if and when $\EXH$ wins or loses in finite time.

\subsubsection{Definition of $\exh(i,p,q, \vdash)$ and $\EXH$}
We define a primitive recursive family $\exh(i,p,q, \vdash)$ of strategies for any $i \le p \le q$ in $\Nat$, and for any $\vdash$ list of triples over $\Nat$, translating the informal definition we sketched above.

The first move of $\exh(i,p,q,\vdash)$ is given by the first true clause in the following list. There is some: indeed, assume that the first four  clauses ($\REPEAT$, $\DROP$,$\EM$, $\STOP$)  are false. Then $i<p\le q$, and if $G_i$ is atomic then $\SKIP$ holds, if $G_i$ is not atomic then $\TRYALL$ holds. All clauses are primitive recursive in the finitary part of $G_0, \ldots, G_{q-1}$: there is relative a primitive recursive map deciding whether there is some $x \in \{\Player, \Opponent\} \cup \Par$ such that $G_i = \END(x)$ and $G_j = \END(x)^\bot$, and finding the first such $x$, if any exists.

\begin{definition}[The strategies $\exh$]
\label{definition-exh}
\mbox{}
\begin{enumerate}

\item[$(\REPEAT)$] 
If $i=p<q$ then $\exh(i,p,q,\vdash) = \exh(0,q,q,\vdash)$ (\emph{$\exh$ starts a new cycle from $G_0, \ldots, G_{q-1}$}).

\item[$(\DROP)$] 
If $i=p=q$ then $\exh(i,p,q,\vdash) = \{\nil, \langle \DROP(q) \rangle\}$ (\emph{There is no non-repeating extension and $\Player$ drops out}).

\item[$(\EM)$] 
If $i<p$ and $G_i = \END(a)$ for some $a \in I$
%some $a \in \Par$ 
and there is a first $j<p$ such that $G_j = \END(a^\bot)$, then $\exh(i,p,q,\vdash) = \{\nil, \langle \EM(i,j) \rangle\}$ (\emph{$\Player$ wins}).  

\item[$(\STOP)$] 
If $i<p$ and $G_i = \END(\Player)$, then $\exh(i,p,q,\vdash) = \{\nil, \langle \STOP(i) \rangle\}$ (\emph{$\Player$ wins}).  

\item[$(\SKIP)$] 
If $i<p$ and $G_i$ is atomic and none of the previous cases applies, then $\exh(i,p,q,\vdash) = \exh(i+1,p,q,\vdash)$, with $i+1 \le p \le q$ (\emph{$\exh$ skips $G_i$ and analyzes $G_{i+1}$}).

\item[$(\TRYALL)$] 
If $i<p$ and $G_i$ is \emph{not} atomic, let $I$ be the set of moves from $G_i$.
\begin{enumerate}
\item
\emph{Assume $\Player = \turno_{G_i}(x)$}. Assume there is some $m \in I$ such that $i \vdash_m k$ is false for all $k<q$. Take the first of them \emph{($m$ is the first move from $G_i$ never done before, if any)}. Then $\exh(i,p,q,\vdash)$ moves $\JUST(i,q)$, then moves $m$ from $G_i$ for $\Player$ and updates $i, q, \vdash$. We set
$$
\exh(i,p,q,\vdash) 
=
\{\nil, \langle \JUST(i,q) \rangle \} 
\ \cup \
\JUST(i,q) @ m @ \exh(i+1,p,q+1, \vdash @ (i,m,q))
$$ 
If no such $m \in I$ exists, then $\exh(i,p,q,\vdash)$ 
%moves the first move $m \in I$ existing from $G_i$ (there is some). 
skips $G_i$ and analyzes $G_{i+1}$: we set 
$$
\exh(i,p,q,\vdash) 
=
\exh(i+1,p,q,\vdash)
$$

\item
\emph{Assume $\Opponent = \turno_{G_i}(x)$}. Assume there is no $m \in I$ such that $i \vdash_m k$ for any $k < q$. Then $\exh(i,p,q,\vdash)$ first moves $\JUST(i,q)$, then waits for a move $m \in I$ from $\Opponent$ and updates $i, q, \vdash$. We set:
$$
\exh(i,p,q,\vdash)
=
\{\nil, \langle \JUST(i,q) \rangle \} 
\ \cup \ 
\ \bigcup_{m \in I} \
\JUST(i,q) @ m @ \exh(i+1,p,q+1, \vdash @ (i,m,q))
$$
%aggiunta per evitare di chiedere due volte a OPPONENT che mossa fa
If there is such an $m$, then $\exh(i,p,q,\vdash)$
skips $G_i$ and analyzes $G_{i+1}$: we set
$$
\exh(i,p,q,\vdash) 
= 
\exh(i+1,p,q,\vdash)
$$ 
\end{enumerate}
\end{enumerate}
Eventually we set $\EXH = \exh(0,n,n,\emptyset)$. 
\end{definition}

%12:14 06/10/2016%

$\EXH$ starts from the state $(0,n,n,\emptyset)$. After applying $n$ clauses $\EXH$ either wins by $\EM$ or $\STOP$, or reaches the state $(n,n,n+h,\vdash)$, the end of the cycle: for $2h$ times $\EXH$ moves for $\Player$ or considers all possible moves from $\Opponent$. If $h=0$ then $\EXH$ plays $\DROP$ and loses. If $h>0$ then $\EXH$ starts a new cycle from the state $(0,n+h,n+h,\vdash)$. Thus, in any state non corresponding to a terminated play, eventually $\EXH$ moves for $\Player$ or considers all moves from $\Opponent$. 

If $p \in \EXH$ is infinite or terminated by $\DROP$ then any local position $G_i$ of $p$ is analyzed at least once. Indeed, if $p$ is infinite then any local position is analyzed in all cycles after it is introduced. If $p \in \EXH$ is terminated by $\DROP$, then the last cycle adds no local positions, therefore all local positions are analyzed at least in the last cycle.

%%%%%%%%%%%%%%%%%%
%21:01 07:10/2016%
%%%%%%%%%%%%%%%%%%

\begin{lemma}[exhaustive strategies]
\label{lemma-exhaustive}
Let $G = \FORALLGAME I.\Gamma$ and $\FV(G) = \emptyset$. 
Then $\EXH$ is a primitive recursive exhaustive strategy for $G$.
\end{lemma}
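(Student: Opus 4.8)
The plan is to check that $\EXH$ is a $\Player$-strategy contained in $|G|$, that it is primitive recursive, and that it meets the three clauses of Definition~\ref{definition-exhaustive}; the engine for the first point is a bookkeeping invariant tying the abstract state $(i,p,q,\vdash)$ of $\exh$ to the play that produced it. First I would prove, by induction on the length of $r\in\EXH$, that if $r$ leads to the state $(i,p,q,\vdash)$ then $q=n$ plus the number of $\JUST$-moves of $r$ (equivalently, the number of local positions of $r$), $G_0,\dots,G_{q-1}$ are exactly those local positions, $\vdash$ is the justification relation of $r$, and $i\le p\le q$ with $p$ the value of $q$ at the start of the current cycle; the clause-by-clause verification is routine, since $\REPEAT$, $\SKIP$ and the ``skip'' branches of $\TRYALL$ make no move and leave $q,\vdash$ fixed, while the active branch appends $\JUST(i,q)$ then a move $m$ from $G_i$ and updates $q\mapsto q+1$, $\vdash\mapsto\vdash@(i,m,q)$ exactly as the tree-support definition of $\FORALLGAME I.\Gamma$ prescribes. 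Using the invariant, legality is immediate ($\DROP(q)$ is played only in state $(q,q,q,\vdash)$; $\EM(i,j)$ only when $G_i=\END(a),G_j=\END(a^\bot),a\in I,i,j<q$; $\STOP(i)$ only when $G_i=\END(\Player)$; $\JUST(i,q)$ only when $G_i$ is non-atomic and $i<q$), prefix-closure is clear from the defining equations, and $\EXH$ is a $\Player$-strategy because $\Opponent$ moves only at the $\JUST(i,q)$-nodes of clause $\TRYALL$(b), where all moves from $G_i$ are kept and the move set is non-empty since $G_i$ is non-atomic. Non-repetition is read off the guards: in $\TRYALL$(a) a move is taken from $G_i$ only if it was never justified before, and in $\TRYALL$(b) a $\JUST$ from $G_i$ is played only while nothing has been justified from $G_i$ and never again afterwards, so $\{j : i\vdash_m j\}$ has at most one element for every $m$ and every local position $G_i$.

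For primitive recursiveness I would note that the transition of $\exh$ is primitive recursive in the current state and the finitary parts of $G_0,\dots,G_{n-1}$ --- deciding ``$G_i=\END(x)$'', ``$G_i$ atomic'', ``$m$ a move from $G_i$'' and evaluating $i\vdash_m k$ are all primitive recursive in those finitary parts --- and then simulate $\EXH$ forward along a candidate play $p$ of length $2k$. The one point to check is a primitive recursive bound on the number of clause applications: inside a cycle at most ``current count'' many $\SKIP$/$\TRYALL$ clauses precede a move or the end of the cycle, and every cycle after the first contains at least one move (a $\REPEAT$ is entered only from a state with $q>p$), so at most $O(k^2)$ clauses are consumed; hence ``$p\in\EXH$'' is primitive recursive.

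Finally, fix $p\in\EXH$ ending by $\DROP$ or infinite. The key fact (noted after Definition~\ref{definition-exh}) is that every local position of $p$ is analysed cofinally: if $p$ is infinite each local position is analysed in every cycle after it appears and there are infinitely many cycles (each cycle being finite), and if $p$ ends by $\DROP$ the last cycle introduces no new position, so it analyses every local position of $p$ while making no move. Hence Definition~\ref{definition-exhaustive}.2 holds: were $G_i=\END(a)$, $G_j=\END(a^\bot)$ ever both local positions with $a\in I$, clause $\EM$ would fire at the cycle analysing $\min(i,j)$ and terminate $p$ by $\EM$, a contradiction; so $p$ is $\EM$-forbidding and the worst-case $p$-assignment $\eta$ of Definition~\ref{definition-worst} is defined. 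For Definition~\ref{definition-exhaustive}.3 (which entails \ref{definition-exhaustive}.1) fix $i<n$ and a node $r$ of the $p$-counter-strategy $\tau_i$, and let $G_{i_\ell}=(G_i)_r$ be the sub-game of $G_i$ reached along the corresponding justification chain. If $r$ is a leaf of $G_i$ then $(G_i)_r=\END(x)$: the case $x=\Player$ is impossible (clause $\STOP$ would fire and terminate $p$); if $x=\Opponent$ then $\vince_{\eta(G_i)}(r)=\Opponent$; and if $x=a\in\Par$ then $a\in\FV(G_i)\subseteq\FV(\Gamma)\subseteq I$ (since $\FV(G)=\FV(\Gamma)\setminus I=\emptyset$), and because $\END(a)$ is a local position and $p$ is $\EM$-forbidding, $\eta(a)=\Opponent$ by construction --- so every leaf of $G_i$ in $\tau_i$ has $\eta$-winner $\Opponent$. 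If $r$ is not a leaf: when $\turno_{G_{i_\ell}}(\nil)=\Player$, clause $\TRYALL$(a) takes at each analysis of $G_{i_\ell}$ the least move not yet taken, so the cofinally many analyses (infinite case) or the move-free last cycle forcing that all moves are already taken ($\DROP$ case) put $r@m\in\tau_i$ for every move $m$ of $G_{i_\ell}$; when $\turno_{G_{i_\ell}}(\nil)=\Opponent$, either $\Opponent$ has already moved from $G_{i_\ell}$ or the next analysis of $G_{i_\ell}$ runs $\TRYALL$(b), plays $\JUST(i_\ell,\cdot)$ and forces an answer (in the $\DROP$ case this ``next analysis'' is the move-free last cycle, so the first alternative must hold), and in either case $\tau_i$ has a continuation at $r$. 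Thus $\tau_i$ is $\Opponent$-total and $\Opponent$-partially winning for $\eta(G_i)$, so $\EXH$ is exhaustive.

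The step I expect to be the main obstacle is this last one, specifically the uniform handling of the two kinds of maximal play: for infinite $p$ one must see that ``analysed cofinally often'' really exhausts the move set of each reached local sub-game even when it is infinite, which relies on $\TRYALL$(a) always picking the \emph{least} untried move; and for $\DROP$-terminated $p$ one must recognise the degenerate move-free final cycle as a genuine witness that all relevant moves and $\Opponent$-answers are already present. Alongside this, the delicate point is verifying that $\EXH$ never plays a winning $\STOP$ or $\EM$ move precisely when $p$ is assumed to end by $\DROP$ or be infinite, which is what rules out the bad leaves in clause~\ref{definition-exhaustive}.3.
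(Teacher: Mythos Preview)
Your proposal is correct and follows essentially the same approach as the paper: non-repetition from the $\TRYALL$ guards, the observation that every local position is eventually analysed (cofinally if $p$ is infinite, in the move-free last cycle if $p$ ends by $\DROP$), and the clause-by-clause elimination of $\EM$-pairs and $\END(\Player)$ leaves using the hypothesis that $p$ ends by $\DROP$ or is infinite. Your treatment is more detailed than the paper's --- the explicit state invariant tying $(i,p,q,\vdash)$ to the play, the $O(k^2)$ clause-count bound for primitive recursiveness, and the careful infinite-versus-$\DROP$ case split are all left implicit or asserted ``by definition'' in the paper --- but these are refinements of the same argument rather than a different route.
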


\begin{proof}
$\EXH$ is a primitive recursive tree support by definition. We already noticed that eventually $\EXH$ moves for $\Player$ or consider all possible moves from $\Opponent$. Thus, $\EXH$ is a $\Player$-total strategy. $\EXH$ is non-repeating because the clauses $\TRYALL.a$ and $\TRYALL.b$ explicitly forbid a repetition, and the other clauses make some move $\DROP$, $\EM$, $\STOP$ which terminate the play, and therefore is unique. We check the conditions \ref{definition-exhaustive}.1,2,3 of the definition of exhaustive strategy. Assume that $p \in \EXH_{\le \infty}$ is terminated by $\DROP$ or $p$ is infinite, that $i < n$ and $\tau_i = \{q \in |G_i| | q \prec_i p\}$.

\begin{enumerate}
\item
{\em $\tau_i$ is $\Opponent$-total}.
If $G_i$ is atomic then any $G_i$-strategy is $\Opponent$-total, since there are no moves. Assume $G_i$ is not atomic. Let $I$ be the set of moves from $G_i$. Assume $q \in \tau_i$: then $G_j = (G_i)_q$ is some local position of $p$. We already noticed that $\EXH$ analyzes $G_j$ at least once. Assume $\Opponent$ moves from $q$. By clause $\TRYALL.a$, $\EXH$ adds one move $m \in I$ from $G_j$, hence $q @ m \prec_i p$ and $q @ m \in \tau_i$. Assume $\Opponent$ moves from $q$. By clause $\TRYALL.b$ and $\REPEAT$, $\EXH$ moves from $G_j$ as many times as there are moves $m \in I$ from $G_j$ which are new, adding each time a new local position, then executing a new cycle. Thus, for all $m \in I$ we have $q @ m \prec_i p$ and $q @ m \in \tau_i$.

%\emph{$\EXH$ is $\Player$-partially winning}. 
%%We assume that $\FV(G) = \emptyset$, and that $\Gamma$ is not $\Player$-immediately losing. Then either for some $i < n$ the local position $G_i$ is not atomic, or there is some winning move $\EM(i,j)$, $\STOP(i)$ for $G$, for some $i, j < n$. In both cases $\EXH$ has always some move to do and never plays $\DROP(q)$. 
%If $\EXH$ plays $\DROP(q)$, then for each local play $G_i$ the counter-strategies $\tau_i$ is finite and $\Opponent$-winning. 
%
%Otherwise, the play reaches a leaf only if $\EXH$ plays some winning move $\EM(i,j)$, $\STOP(i)$, in which case $\Player$ wins. Thus, $\EXH$ is $\Player$-total and $\Player$-partially winning.

\item
\emph{$p$ is $\EM$-forbidding}. Assume $p$ is not: then there are some $G_i = \END(x)$, $G_j = \END(x^\bot)$ for some $x \in \FV(\Gamma)$. Take the first such $j$, and the first $i$ for such $j$. Eventually, either $\Player$ wins in $p$ or $\EXH$ analyzes $G_j$ and by clause $\EM$ moves $\EM(i,j)$. In both cases $\Player$ wins and the play $p$ stops. This contradicts the assumption that $p$ ends with some $\DROP$ move or that $p$ is infinite.

\item 
{\em $\tau_i$ is $\Opponent$-partially winning}.
Let $\eta$ be the worst-case $p$-assignment: since $p$ is $\EM$-forbidding by point $2$ above,
and $\FV(G) = \emptyset$, 
then $\eta(a)  = \Opponent$ for all local positions $\END(a)$ in $p$. Let $\tau_i : G_i$ be the set of $q \in |G_i|$ such that $q \prec_i p$. By point $1$ above, $\tau_i$ is $\Opponent$-total for $G_i$, hence for $\eta(G_i)$, because $\eta(G_i)$ and $G_i$ coincide on all nodes having children. In order to prove that $\tau_i$ is partially $\Opponent$-winning on $\eta(G_i)$, we have to prove that all finite terminated $q \in \tau_i$ are won by $\Opponent$ in $\eta(G_i)$. We have $q \prec_i p$ by definition of $\tau_i$. The last position of a local play $q$ is $G_k = \END(x)$ for some $k \in \Nat$, $x \in \Par \cup \{\Player, \Opponent\}$. By cases on $x$ we prove that $\eta(\END(x)) = \END(\Opponent)$, that is, that $\eta(x) = \Opponent$.
\begin{enumerate}
\item
Let $x = a \in \Par$. Then we have $\eta(a) = \Opponent$ by definition of $\eta$.   
\item
Let $x = \Opponent$. Then $\eta(x) = \eta(\Opponent) = \Opponent$.
\item
Let $x = \Player$. Then we would have $G_k = \END(\Player)$, but we prove that this cannot be. Eventually, either $\Player$ wins $p$, or $\EXH$ analyzes $G_k$, and by clause $\STOP$ moves $\STOP(k)$, $\Player$ wins and $p$ stops. This contradicts the assumption that $p$ ends with some $\DROP$ move or $p$ is infinite.
\end{enumerate}
\end{enumerate}
\end{proof}

\subsection{Effective Soundness and Completeness result for the game $\FORALLGAME I.\Gamma$}
In this subsection we characterize the game $G = \FORALLGAME I.\Gamma$, first in the case $\FV(G) = \emptyset$ and then in general.

\begin{lemma}[the game $\FORALLGAME I.\Gamma$]
\label{lemma-forall}
Let $\Gamma = (G_0, …, G_{n-1}) \in \setP^n$ be a list of $n$ games, $I \subseteq \Par$ a self-dual set of parameter and $G = \FORALLGAME I.\Gamma$. 
Assume $\FV(G) = \emptyset$ (hence $I \subseteq \FV(\Gamma))$. 
These following are equivalent. 

\begin{enumerate}
\item
There is some assignment $\eta:I \rightarrow \{\Player, \Opponent\}$ and for all $i<n$, some $\Opponent$-winning strategy $\tau_i : \eta(G_i)$
\item
There is some $\Opponent$-winning strategy $\tau$ for $\FORALLGAME I.\Gamma$.
\item
There is no $\Player$-winning strategy $\sigma$ for $\FORALLGAME I.\Gamma$.
\item
The primitive recursive exhaustive strategy $\EXH$ for $\FORALLGAME I.\Gamma$ is not $\Player$-winning.
\end{enumerate}

\end{lemma}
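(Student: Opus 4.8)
The plan is to establish the chain of implications $1 \Rightarrow 2 \Rightarrow 3 \Rightarrow 4 \Rightarrow 1$, with the first and last being the genuine content and the middle two essentially formal. For $2 \Rightarrow 3$: if there were both an $\Opponent$-winning strategy $\tau$ and a $\Player$-winning strategy $\sigma$ for $\FORALLGAME I.\Gamma$ (a game without parameters, since $\FV(G)=\emptyset$), then confronting $\sigma$ against $\tau$ produces a single maximal play that both players claim to win, a contradiction; this is the standard ``no game is won by both players'' argument, using that $\sigma$ is a $\Player$-strategy and $\tau$ is an $\Opponent$-strategy so their intersection contains exactly one maximal play. For $3 \Rightarrow 4$: this is immediate by contraposition, since $\EXH$ is in particular a $\Player$-strategy for $G$, so if $\EXH$ were $\Player$-winning then a $\Player$-winning strategy would exist.

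The two substantial steps are $1 \Rightarrow 2$ and $4 \Rightarrow 1$. For $1 \Rightarrow 2$: given an assignment $\eta:I \to \{\Player,\Opponent\}$ and $\Opponent$-winning strategies $\tau_i:\eta(G_i)$ for each $i<n$, I would build an $\Opponent$-winning strategy $\tau$ for $\FORALLGAME I.\Gamma$ by having $\Opponent$ respond, in each local position $G_i$, according to $\tau_i$ — more precisely, $\Opponent$ plays as follows: whenever $\Player$ moves $\JUST(i,n+k)$ handing the turn to $\Opponent$ at the root of $G_{i_k}$, $\Opponent$ consults the strategy for the local position that $G_{i_k}$ descends from and moves accordingly; moves $\DROP$, $\EM$, $\STOP$ are $\Player$-moves (or their outcome is already fixed), so $\Opponent$ need not react to them. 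One then checks that $\tau$ is $\Opponent$-total and $\Opponent$-partially winning: a play terminated by $\DROP$ is won by $\Opponent$ outright; a terminating $\EM(i,j)$ move cannot occur because $\eta$ assigns $a$ and $a^\bot$ opposite values, so $\Opponent$ wins one of $\eta(G_i)=\END(\eta(a))$, $\eta(G_j)=\END(\eta(a^\bot))$ — wait, this must be handled more carefully: actually the point is that in case $1$ the strategies $\tau_i$ being $\Opponent$-winning for $\eta(G_i)$ already guarantees that no local play reaches a leaf won by $\Player$ under $\eta$, which rules out $\STOP(i)$ producing a $\Player$-win, and the $\EM$ move is available only when $G_i=\END(a)$, $G_j=\END(a^\bot)$ both occur, but then one of $\tau_i,\tau_j$ would already have failed to be $\Opponent$-winning unless $\eta$ makes that leaf $\Opponent$-won, contradiction — so $\EM$ is in fact never a legal-and-reached configuration under $\tau$. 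For infinite plays: by the definition of $G_{\Opponent}$, $\Opponent$ wins the infinite play $p$ iff every infinite local play $q \prec_i p$ lies in $(G_i)_{\Opponent} = (\eta(G_i))_{\Opponent}$; since the local plays generated by $\tau$ are plays consistent with $\tau_i$ and each $\tau_i$ is $\Opponent$-winning, this holds — here Lemma~\ref{lemma-non-repeating} is needed to ensure that the infinite branches of $\tau_i$ (viewed inside $p$) are genuine local plays $q \prec_i p$, which requires first arranging that $\tau$ is non-repeating (easily done, since repetition is never forced).

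For $4 \Rightarrow 1$: suppose $\EXH$ is not $\Player$-winning. Since by Lemma~\ref{lemma-exhaustive} $\EXH$ is $\Player$-total (hence every maximal play in $\EXH$ is either finite ending in a leaf, or infinite), failure to be $\Player$-winning means there is a maximal play $p \in \EXH$ that $\Player$ loses: either $p$ is finite ending in a leaf won by $\Opponent$, or $p$ is infinite with $p \in G_{\Opponent}$. By Lemma~\ref{lemma-exhaustive} (exhaustiveness), any finite $p \in \EXH$ ending in a leaf must end via $\DROP$, $\EM$, or $\STOP$; the clause analysis in the proof of Lemma~\ref{lemma-exhaustive} shows $\EM$ and a $\Player$-winning $\STOP$ and so on are never the losing endings, so the losing $p$ ends by $\DROP$ or is infinite. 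Now apply exhaustiveness directly: let $\eta$ be the worst-case $p$-assignment (well-defined since $p$ is $\EM$-forbidding by \ref{definition-exhaustive}.2), which is an $I$-assignment. Condition \ref{definition-exhaustive}.3 gives that each $p$-counter-strategy $\tau_i = \{q \in |G_i| \mid q \prec_i p\}$ is $\Opponent$-partially winning for $\eta(G_i)$; it remains to upgrade ``partially winning'' to ``winning'', i.e.\ to handle infinite local plays. If $q \in (\tau_i)_\infty$ then by Lemma~\ref{lemma-non-repeating}.2 (applicable since $\EXH$ is non-repeating, so $p$ is non-repeating) we get $q \prec_i p$, hence $q$ is an infinite local play of $p$ in $G_i$; since $\Opponent$ wins $p$, by definition of $G_{\Opponent}$ we have $q \in (G_i)_{\Opponent} = (\eta(G_i))_{\Opponent}$. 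Thus every $\tau_i$ is $\Opponent$-winning for $\eta(G_i)$, establishing~$1$.

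The main obstacle is the bookkeeping in $1 \Rightarrow 2$: constructing $\tau$ and, above all, verifying that the infinite local plays it produces are exactly plays consistent with the given $\tau_i$ and that they show up as genuine $\prec_i$-local plays of the global play — this is where non-repetition and Lemma~\ref{lemma-non-repeating} are essential, and where the interleaving/backtracking structure of $\FORALLGAME I.\Gamma$ makes the argument delicate rather than routine.
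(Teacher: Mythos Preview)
Your chain $1 \Rightarrow 2 \Rightarrow 3 \Rightarrow 4 \Rightarrow 1$ is exactly the paper's, and your treatment of $2 \Rightarrow 3$, $3 \Rightarrow 4$, and $4 \Rightarrow 1$ matches the paper's proof essentially line for line (your $4 \Rightarrow 1$ is in fact slightly more careful than the paper in separating the $\DROP$-terminated case from the infinite case before invoking exhaustiveness).

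There is one confusion in your $1 \Rightarrow 2$, however. You say that Lemma~\ref{lemma-non-repeating} is needed and that one must ``first arrange that $\tau$ is non-repeating (easily done, since repetition is never forced)''. This is wrong on two counts. First, repetition \emph{is} forced on $\Opponent$: the $\JUST$ moves are made by $\Player$, so an $\Opponent$-strategy must accommodate every $\Player$-move, including repeated ones; $\Opponent$ cannot make $\tau$ non-repeating. Second, and more importantly, non-repetition is simply not needed in this direction. The paper defines $\tau = \{p \in |G| \mid \forall i<n.\ \forall q \prec_i p.\ q \in \tau_i\}$, and then the infinite-play case is immediate: to show $p \in G_\Opponent$ one must show every infinite local play $q \prec_i p$ lies in $(G_i)_\Opponent$; but every finite prefix of such a $q$ is $\prec_i$ some finite prefix of $p$, hence lies in $\tau_i$ by definition of $\tau$, so $q \in (\tau_i)_\infty \subseteq (G_i)_\Opponent$ since $\tau_i$ is $\Opponent$-winning. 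You had the direction of the inclusion backwards: what is needed is ``infinite local play $\Rightarrow$ branch of $\tau_i$'', not the converse, and the needed direction follows directly from the construction of $\tau$ with no appeal to Lemma~\ref{lemma-non-repeating}. Once you drop this detour, your $1 \Rightarrow 2$ coincides with the paper's.
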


\begin{proof}
\begin{itemize}

\item
$(1 \implies  2)$. 
Assume that there is some assignment $\eta : I \rightarrow \{\Player, \Opponent\}$ and for all $i<n$, some $\Opponent$-winning strategy $\tau_i:\eta(G_i)$. We have to define some $\Opponent$-winning strategy $\tau : \FORALLGAME I.\Gamma$. We define $\tau$ as $\{p \in |G| | \forall i<n.\forall  q \prec_i p. q \in \tau_i\}$. $\tau$ is the strategy which follows $\tau_i$ on each local play on $G_i$, for $i = 0, \ldots, n-1$. $\tau$ is an $\Opponent$-strategy and it is $\Opponent$-total because all $\tau_i$ are. We have to prove that $\tau$ is $\Opponent$-winning. For any $p \in \tau$ and any local position $G_j$ of $p$, we have $G_j = (G_i)_q$ for exactly one $i<n$, one $q \in \tau_i$. Since each $\tau_i$ is $\Opponent$-winning, we deduce that $(\tau_i)_q : G_j$ is an $\Opponent$-winning strategy on $\eta(G_j)$. Thus, there is some $\Opponent$-winning strategy for all local positions of $p$. For $\tau$ to be $\Opponent$-winning, we still have to prove that all (possibly infinite) maximal plays $p \in \sigma_{\le \infty}$ are won by $\Opponent$. We argue by cases. Assume $p$ is finite: then either $p$ ends by a move $\DROP(i)$ by $\Player$, or a by a move $\EM(i,j)$ by $\Player$, or with $\STOP(i)$, for some $i,j$. If $\Player$ moves $\DROP(i)$ then $\Opponent$ wins. $\Player$ cannot move $\EM(j,k)$, otherwise there are two local positions $G_j, G_k$ of $p$ such that $G_j = G^\bot_k$, and we proved that $\Opponent$ has some winning strategies for both $\eta(G_j) = \eta(G_k)^\bot$ and $\eta(G_k)$, contradiction. Any finite terminated $q \prec_i p$ is in the $\Opponent$-winning strategy $\tau_i$ and therefore if $\Player$ moves some $\STOP(k)$ then $\eta(G_k) = \END(\Opponent)$. By definition of $\STOP$ move we have $\FV(G_k) \cap I = \emptyset$: from $\FV(G_k) \subseteq \FV(\Gamma) \subseteq I$ we conclude $\FV(G_k) = \emptyset$, hence $G_k = \END(\Opponent)$. Thus, $\Opponent$ wins $p$. Let us assume that $p$ is infinite. Then for all $i = 0, \ldots, n-1$, all infinite $q \prec_i p$ are in $(\tau_i)_\infty$ and therefore are won by $\Opponent$. In all cases, $p$ is won by $\Opponent$.

\item
$(2 \implies  3)$. 
Assume that is some $\Opponent$-winning strategy $\tau$ for $\FORALLGAME I.\Gamma$. Then there is no $\Player$-winning strategy $\sigma$ for $\FORALLGAME I.\Gamma$, otherwise we would produce a contradiction by letting them play together.

\item
$(3 \implies  4)$.
Assume there is no $\Player$-winning strategy $\sigma$ for $\FORALLGAME I.\Gamma$. Then, in particular, the strategy $\EXH$ is not $\Player$-winning for $\FORALLGAME I.\Gamma$. 

\item
$(4 \implies  1)$.
Assume that the strategy $\EXH$ is not $\Player$-winning for $G = \FORALLGAME I.\Gamma$. We have to define some $\FV(\Gamma)$-assignment $\eta$ and some $\Opponent$-winning strategies for $\eta(G_0), \ldots, \eta(G_{n-1})$.
Since $\EXH$ is $\Player$-partially winning but not $\Player$-winning, there is some infinite play $p \in \EXH \cap G_{\Opponent}$. Let us choose any. We take for $\eta$ the worst-case $p$-assignment. Let $\tau_i : G_i$ be the $p$-counter-strategy $\{q \in |G_i| | q \prec_i p\}$. $\EXH$ is exhaustive by Lemma \ref{lemma-exhaustive}, hence, by definition of exhaustive, $\tau_i$ is $\Opponent$-partially winning on $\eta(G_i)$. To prove that $\tau_i$ is $\Opponent$-winning, we have to prove that all infinite plays $q \in (\tau_i)_{\infty}$ are in $\eta(G_i)_{\Opponent}$. By $\EXH$ non-repeating and Lemma \ref{lemma-non-repeating}.2 we deduce that $q \prec_i p$. By definition of $p \in G_{\Opponent}$, all $q \prec_i p$ are in $(G_i)_{\Opponent}$, and by definition of $\eta(G_i)_{\Opponent}$ we have $q \in (G_i)_{\Opponent} = \eta(G_i)_{\Opponent}$.

\end{itemize}
\end{proof}

We may now prove that $\EXH$ is some primitive recursive $\Player$-winning strategy for $G = \FORALLGAME I.\Gamma$, whenever $\FV(G) = \emptyset$ and some $\Player$-winning strategy exists for $\FORALLGAME I.\Gamma$.

\begin{theorem}
[Effective Soundness and Completeness for $\FORALLGAME I.\Gamma \in \setG$] 
\label{theorem-forall}
Let $\Gamma = G_0, \ldots, G_{n-1}$ be $n$ parametric games and $G = \FORALLGAME I.\Gamma$ and $\FV(G) = \emptyset$.
\begin{enumerate}
\item
The game $G$ is determined.
\item
Assume all $\eta(G_0), \ldots, \eta(G_{n-1})$ are determined, for all $I$-assignment $\eta$. Then the following are equivalent.
\begin{enumerate}
\item
There is some $\Player$-winning strategy $\sigma:G$. 
\item
$\EXH$ is some primitive recursive $\Player$-winning strategy for $G$.
\item
For any $I$-assignment $\eta$ there is some $i<n$ and some $\Player$-winning strategy $\sigma_i : \eta(G_i)$.
\end{enumerate}
\end{enumerate}
\end{theorem}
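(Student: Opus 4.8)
The plan is to derive the theorem almost entirely from Lemma~\ref{lemma-forall} and Lemma~\ref{lemma-exhaustive}, which already contain the combinatorial work; what remains is bookkeeping about which of the four equivalent conditions of Lemma~\ref{lemma-forall} to invoke, and about where the determinacy hypothesis on the games $\eta(G_i)$ is actually used.

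For part~1 I would argue as follows. Lemma~\ref{lemma-forall} gives, \emph{with no extra hypothesis}, the equivalence of ``there is no $\Player$-winning strategy for $G$'' (its item~3) with ``there is an $\Opponent$-winning strategy for $G$'' (its item~2). Since a proposition and its negation form a tautological disjunction, $G$ is either $\Player$-winning (when item~3 fails) or $\Opponent$-winning (item~2); in either case $G$ is determined. I would stress that this uses nothing about the $\eta(G_i)$: determinacy of $\FORALLGAME I.\Gamma$ is intrinsic. Concretely one may also phrase it via $\EXH$: either $\EXH$ is $\Player$-winning, so $G$ is won by $\Player$, or it is not, and then Lemma~\ref{lemma-forall} ($4\Rightarrow2$) supplies an $\Opponent$-winning strategy.

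For part~2 I would close a small cycle of implications among $(a),(b),(c)$. The implication $(b)\Rightarrow(a)$ is immediate, since $\EXH$ is itself a $\Player$-winning strategy for $G$. For $(a)\Rightarrow(b)$, contraposing the equivalence of items~3 and~4 of Lemma~\ref{lemma-forall} yields ``some $\Player$-winning strategy for $G$ exists $\iff$ $\EXH$ is $\Player$-winning''; combined with Lemma~\ref{lemma-exhaustive}, which makes $\EXH$ primitive recursive, $(a)$ turns $\EXH$ into a primitive recursive $\Player$-winning strategy. For $(c)\Rightarrow(a)$, suppose $G$ had no $\Player$-winning strategy; then Lemma~\ref{lemma-forall} ($3\Rightarrow1$) produces an $I$-assignment $\eta$ together with $\Opponent$-winning strategies for every $\eta(G_i)$, $i<n$; applying $(c)$ to this very $\eta$ gives a $\Player$-winning strategy for some $\eta(G_i)$, so that game would carry both an $\Opponent$-winning and a $\Player$-winning strategy, which is impossible (make the two strategies play against each other). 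Finally, for $(a)\Rightarrow(c)$: fix an $I$-assignment $\eta$, and suppose none of $\eta(G_0),\dots,\eta(G_{n-1})$ has a $\Player$-winning strategy; by the standing determinacy hypothesis each $\eta(G_i)$ then has an $\Opponent$-winning strategy, so Lemma~\ref{lemma-forall} ($1\Rightarrow3$) shows $G$ has no $\Player$-winning strategy, contradicting $(a)$. These four implications give $(a)\Leftrightarrow(b)$ and $(a)\Leftrightarrow(c)$, hence the claimed equivalence.

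I do not expect a genuine obstacle here: the theorem is essentially a repackaging of the two preceding lemmas. The only point requiring care is tracking the hypotheses, so that part~1 is stated and proved \emph{without} the determinacy assumption used in part~2; indeed the determinacy of the $\eta(G_i)$ enters only in the step $(a)\Rightarrow(c)$, to pass from ``$\eta(G_i)$ has no $\Player$-winning strategy'' to ``$\eta(G_i)$ has an $\Opponent$-winning strategy''. A minor sanity check to include is that $\eta(G_i)$ is well defined for an $I$-assignment $\eta$, which holds because $\FV(G)=\emptyset$ forces $\FV(G_i)\subseteq\FV(\Gamma)\subseteq I$.
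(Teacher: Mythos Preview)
Your proposal is correct and follows essentially the same route as the paper: both parts are obtained from Lemma~\ref{lemma-forall} (together with Lemma~\ref{lemma-exhaustive} for the primitive recursiveness of $\EXH$), with the determinacy hypothesis on the $\eta(G_i)$ used precisely to convert ``no $\Opponent$-winning strategy'' into ``some $\Player$-winning strategy'' in part~2. The only cosmetic difference is that the paper negates items~1,~3,~4 of Lemma~\ref{lemma-forall} in one stroke to get the three-way equivalence, whereas you unpack the same content into an explicit cycle $(a)\Leftrightarrow(b)$, $(a)\Leftrightarrow(c)$; your extra remark that $\FV(G_i)\subseteq I$ (so that $\eta(G_i)$ is well defined) is a welcome sanity check.
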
 

\begin{proof}
\begin{enumerate}
\item
Assume there is some $\Player$-winning strategy for $G$: then $G$ is determined. Assume there is no $\Player$-winning strategy for $G$: then by Lemma \ref{lemma-forall}, point $2\implies 3$, there is some $\Opponent$-winning strategy for $G$. Also in this case $G$ is determined.

\item 
If we take the negation of points $1,3,4$ of Theorem \ref{lemma-forall}, we obtain that the following are equivalent: ``there exists some $\Player$-winning strategy for $G$'' (point $2.a$), ``$\EXH$ is some $\Player$-winning strategy for $G$'' (point $2.b$), and ``for any $\FV(\Gamma)$-assignment $\eta$ there is some $i<n$ and no $\Opponent$-winning strategy $\tau_i:\eta(G_i)$''. Since we assumed that all $\eta(G_i)$ are determined, this latter is equivalent to: `for any $\FV(\Gamma)$-assignment $\eta$ there is some $i<n$ and some $\Player$-winning strategy $\sigma_i:\eta(G_i)$''. This is point $2.c$ of the thesis.
\end{enumerate}
\end{proof}

We may prove a similar result for any game of the form $\rho(\FORALLGAME I.\Gamma)$.

\begin{corollary}
[Effective Soundness and Completeness for $\FORALLGAME I.\Gamma$] 
\label{corollary-forall}
Let $\Gamma = G_0, \ldots, G_{n-1}$ be $n$ parametric games and $G = \FORALLGAME I.\Gamma$ and $\rho \in \setE(G)$ be any parameter assignment.
\begin{enumerate}
\item
The game $\rho(G)$ is determined.
\item
If for all $I$-assignment $\eta$ the games $(\rho,\eta)(G_0), \ldots, (\rho,\eta)(G_{n-1})$ are determined,hen the following are equivalent.
\begin{enumerate}
\item
There is some $\Player$-winning strategy $\sigma:\rho(G)$. 
\item
$\EXH$ is some primitive recursive $\Player$-winning strategy for $\rho(G)$.
\item
For any $I$-assignment $\eta$ there is some $i<n$ and some $\Player$-winning strategy $\sigma_i : (\rho,\eta)(G_i)$.
\end{enumerate}
\end{enumerate}
\end{corollary}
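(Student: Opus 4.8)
The plan is to reduce Corollary \ref{corollary-forall} to Theorem \ref{theorem-forall} by recognising $\rho(\FORALLGAME I.\Gamma)$ as a game of the form $\FORALLGAME I.\Gamma'$ whose set of parameters is empty. First I would normalise $\rho$: since $\rho\in\setE(G)$ for $G=\FORALLGAME I.\Gamma$, its domain is a self-dual set $I'\supseteq\FV(G)=\FV(\Gamma)\setminus I$, and I replace $\rho$ by its restriction to $I'\setminus I$. This set is still self-dual (a difference of self-dual sets) and still contains $\FV(G)$, which is disjoint from $I$; moreover the replacement changes neither $\rho(G)$ nor any $\rho(G_i)$ — these depend on $\rho$ only through its values on parameters actually labelling leaves, all of which lie in $\FV(\Gamma)\setminus I$ — nor the over-writing $(\rho,\eta)$ restricted to $\FV(\Gamma)$ for any $I$-assignment $\eta$. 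So I may assume $\dom(\rho)\cap I=\emptyset$.

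The key step is the identity
\[
\rho(\FORALLGAME I.\Gamma)=\FORALLGAME I.(\rho(G_0),\dots,\rho(G_{n-1})),
\]
which I would prove by inspecting the inductive definition of $|\FORALLGAME I.\Gamma|$ together with $\turno$, $\vince$, and the partition $G_\Player,G_\Opponent$. Replacing each $G_i$ by $\rho(G_i)$ only rewrites the parametric leaves labelled by parameters in $\dom(\rho)$ (disjoint from $I$) into leaves labelled by players, and every side condition of the game is insensitive to this: an $\EM(i,j)$ move needs $G_i=\END(b)$, $G_j=\END(b^\bot)$ with $b\in I$, and as $\rho$ fixes $b$ we have $\rho(G_i)=\END(b)$ iff $G_i=\END(b)$; a $\STOP(i)$ move needs $G_i=\END(g)$ with $g\in\{\Player,\Opponent\}$ or $G_i=\END(a)$ with $a\in\FV(G)=\FV(\Gamma)\setminus I\subseteq\dom(\rho)$, which holds exactly when $\rho(G_i)$ is an atomic game labelled by a player; the turn on odd positions is $\turno_{G_i}(\nil)=\turno_{\rho(G_i)}(\nil)$; and $G_\Player,G_\Opponent$, being built from the local-play relation and the sets $(G_i)_g$, are untouched by $\rho$. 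Moreover $\FV(\rho(G_i))=\FV(G_i)\setminus\dom(\rho)\subseteq\FV(G_i)\cap I\subseteq I$ (using $\FV(G_i)\subseteq\FV(\Gamma)$ and $\FV(\Gamma)\setminus I\subseteq\dom(\rho)$), so the game on the right has $\FV=\FV(\rho(\Gamma))\setminus I=\emptyset$; hence $\rho(G)\in\setG$ and is precisely a game to which Theorem \ref{theorem-forall} applies, with local positions $\rho(G_0),\dots,\rho(G_{n-1})$.

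To conclude, I apply Theorem \ref{theorem-forall} to $\FORALLGAME I.(\rho(G_0),\dots,\rho(G_{n-1}))=\rho(G)$. Its part 1 gives that $\rho(G)$ is determined, which is point 1 of the corollary. For point 2, I observe that for every $I$-assignment $\eta$ and every $i<n$ we have $\eta(\rho(G_i))=(\rho,\eta)(G_i)$: since $\dom(\rho)\cap I=\emptyset$ and $\FV(G_i)\subseteq\dom(\rho)\cup I$, applying $\rho$ and then $\eta$ rewrites the $\dom(\rho)$-parameters of $G_i$ by $\rho$ and the $I$-parameters by $\eta$, which is exactly $(\rho,\eta)$. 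So the hypothesis ``all $(\rho,\eta)(G_i)$ are determined for every $I$-assignment $\eta$'' is the hypothesis of Theorem \ref{theorem-forall}.2 for the game $\FORALLGAME I.(\rho(G_0),\dots,\rho(G_{n-1}))$, and its three equivalent conditions — read through the identity above and through $\eta(\rho(G_i))=(\rho,\eta)(G_i)$ — become verbatim conditions (a), (b), (c) of the corollary; for (b) one uses that $\EXH$ for $\FORALLGAME I.(\rho(G_0),\dots,\rho(G_{n-1}))$ and for $\rho(G)$ are literally the same strategy, primitive recursive exactly when the finitary parts of the $\rho(G_i)$ are (equivalently, when $\rho$ and the finitary parts of the $G_i$ are).

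The main obstacle is the identity of the second step: one must check, node-by-node along the inductive generation of the tree support, that the admissibility conditions for the $\DROP$, $\EM$, $\STOP$ and $\JUST$ moves and the values of $\turno$, $\vince$, $G_\Player$, $G_\Opponent$ are all preserved when each local position $G_i$ is replaced by $\rho(G_i)$. The delicate point is the interplay between the requirement $a\in\FV(G)$ in the $\STOP$ clause for $\FORALLGAME I.\Gamma$ and the fact that $\FV$ collapses to $\emptyset$ once $\rho$ is applied, so that this clause survives but in the degenerate form ``$a$ has become a player''.
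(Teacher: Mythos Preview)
Your approach is exactly the paper's: restrict $\rho$ so that its domain is disjoint from $I$ (the paper restricts all the way down to $\FV(G)=\FV(\Gamma)\setminus I$, you to $\dom(\rho)\setminus I$; either works), recognise $\rho(G)=\FORALLGAME I.(\rho(G_0),\ldots,\rho(G_{n-1}))$ with empty free-parameter set, verify $\eta(\rho(G_i))=(\rho,\eta)(G_i)$, and apply Theorem~\ref{theorem-forall}. One small inaccuracy that does not affect the argument: your claim that the restriction leaves each $\rho(G_i)$ unchanged is false in general, since leaves of $G_i$ may carry parameters in $I\cap\dom(\rho)$; but you never actually use this --- what you need, and do correctly argue, is that $\rho(G)$ and each $(\rho,\eta)(G_i)$ are unchanged by the restriction.
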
 

\begin{proof}
Let $\rho' = $ the restriction of $\rho$ to $\FV(G) = \FV(\Gamma) \setminus I$. Then $\rho(G) = \rho'(G) = \forall I. \rho'(\Gamma)$ and for all $i<n$ we have $\eta(\rho'(G_i)) = $ (by def.) $(\eta,\rho')(G_i) =$ (since $\rho', \eta$ have disjoint domains) $(\rho', \eta)(G_i) =$ (on $(\FV(G_i) \setminus I)$, the maps $\rho, \rho'$ coincide) $ (\rho, \eta)(G_i)$. Besides, $\FV(\rho'(G)) = \FV(\rho(G)) = \emptyset$. We conclude our thesis by Thm. \ref{theorem-forall} applied to $\rho'(G)$. 
\end{proof}

We are now ready to prove our main theorem for $L(\PA^2)$.

\begin{theorem}[Effective Soundness and Completeness for $L(\PA^2)$]
\label{theorem-main}
Assume $A \in L(\PA^2)$ is a $1$-closed formula and $\Gamma \in L(\PA^2)$ is a $1$-closed sequent. Let $\theta \in \setE(A)$ and $\rho \in \setE(\interpret{A})$.
\begin{enumerate}
\item
$\rho(\interpret{A})$ is a determined game.

\item
If $\rho$, $\theta$ are corresponding assignments, then $[A]_\theta = \true$ if and only if $\interpret{A}_\rho = \Player$.

\item
$\interpret{A} \in \setP$ is a valid parametric game if and only if $A \in L(\PA^2)$ is a valid second order formula.

\item
if $\Gamma$ is valid, then there is some primitive recursive winning strategy for $\interpret{\Gamma} \in \setG$.

\end{enumerate}
\end{theorem}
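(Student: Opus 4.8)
The plan is to prove parts (1) and (2) simultaneously by induction on the structure of $A$, and then to read off (3) from (2) and (4) from (2) together with Theorem~\ref{theorem-forall}. Throughout the induction I keep a \emph{corresponding pair} in play: an environment $\theta \in \setE(A)$ and a parameter assignment $\rho \in \setE(\interpret{A})$ with $\rho(X(\vec{c_n})) = \Player$ exactly when $\theta(X)(\vec n) = \true$. Restricting to a subformula keeps $\theta,\rho$ corresponding, and entering a second order quantifier $\FORALL X.B$ one extends simultaneously by an environment $\eta$ with $\dom(\eta) = \{X\}$ and by the associated $\Var(X)$-assignment $\eta'$, so that $(\theta,\eta)$ and $(\rho,\eta')$ are again corresponding. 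For the atomic base cases: if $A = p(\vec t)$ or its dual, then $\rho(\interpret{A}) = \interpret{A} = \END(g)$ is a one-node game, hence determined, with winner $g = \Player$ iff $[A]_\theta = \true$; if $A = X(\vec t)$ or its dual, then $\rho(\interpret{A}) = \END(\rho(a))$ for $a = X(\vec{c_n})$, determined with winner $\rho(a)$, which is $\Player$ iff $[A]_\theta = \true$ by the correspondence.

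For the order-$0$ and order-$1$ conjunctive cases, take $A = A_1 \wedge A_2$ (the case $A = \forall x.B$ is identical with $\Opponent$ ranging over $\Nat$). Then $\rho(\interpret{A}) = \rho(\interpret{A_1}) \wedge \rho(\interpret{A_2})$ and the immediate sub-game chosen by $\Opponent$ when he picks $i$ is $\rho(\interpret{A_i})$. By the induction hypothesis each $\rho(\interpret{A_i})$ is determined with winner matching $[A_i]_\theta$. If $[A_1]_\theta = [A_2]_\theta = \true$, then whichever $i$ $\Opponent$ picks $\Player$ follows his winning strategy on $\rho(\interpret{A_i})$, so $\rho(\interpret{A})$ is $\Player$-winning; if some $[A_i]_\theta = \false$, $\Opponent$ picks that $i$ and wins; either way $\rho(\interpret{A})$ is determined with winner matching $[A]_\theta$. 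The disjunctive cases $A_1 \vee A_2$, $\exists x.B$, and $\EXISTS X.B$ are reduced to the conjunctive ones via the defining clause $\interpret{A} = \interpret{A^\bot}^\bot$ and $[A]_\theta = [A^\bot]^\bot_\theta$, using that $\rho(G^\bot) = \rho(G)^\bot$ (both negations are dual-preserving), that $\rho$ corresponds to $\theta$ also as an $A^\bot$-environment, and that $G$ is determined with winner $g$ iff $G^\bot$ is determined with winner $g^\bot$.

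The interesting step is $A = \FORALL X.B$, with $I = \Var(X)$ and $\interpret{A} = \FORALLGAME I.(\interpret{B})$. By the induction hypothesis for (1) applied to $B$ under arbitrary parameter assignments, $(\rho,\eta')(\interpret{B})$ is determined for every $I$-assignment $\eta'$, so Corollary~\ref{corollary-forall} applies to $G = \FORALLGAME I.(\interpret{B})$ with $n = 1$ and assignment $\rho$; its part~(1) gives that $\rho(\interpret{A})$ is determined, which is (1) for $A$. For (2), unwind the interpretation: $[\FORALL X.B]_\theta = \true$ iff $[B]_{\theta,\eta} = \true$ for every $\eta$ with $\dom(\eta) = \{X\}$, iff (induction hypothesis for (2) on $B$, plus the correspondence of $(\theta,\eta)$ with $(\rho,\eta')$) the game $(\rho,\eta')(\interpret{B})$ is $\Player$-winning for every $I$-assignment $\eta'$, iff (Corollary~\ref{corollary-forall}(2), equivalence of clauses (a) and (c) with $n=1$) $\rho(\interpret{A})$ is $\Player$-winning, i.e. $\interpret{A}_\rho = \Player$. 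The hard part will be exactly here: aligning the two ``for all $\eta$'' quantifications and verifying that the correspondence survives the simultaneous extension by $\eta$ and $\eta'$; the remainder of the induction is bookkeeping with the Tarski-game operators and the duality clause.

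Finally, for (3) I would observe that $\interpret{A}$ is valid (i.e. $\rho(\interpret{A})$ is $\Player$-winning for every $\rho \in \setE(\interpret{A})$) iff, by (2), $[A]_\theta = \true$ for the corresponding $\theta$; since every $\theta \in \setE(A)$ has a corresponding $\rho$ and conversely, this is validity of $A$. For (4), write $\Gamma = (A_0, \ldots, A_{n-1})$ and $I = \Var(\FV(\Gamma))$, so $\interpret{\Gamma} = \FORALLGAME I.(\interpret{A_0}, \ldots, \interpret{A_{n-1}})$; since $\Gamma$ is $1$-closed and $I$ captures every free predicate variable of $\Gamma$, $\FV(\interpret{\Gamma}) = (\FV(\interpret{A_0}) \cup \cdots \cup \FV(\interpret{A_{n-1}})) \setminus I = \emptyset$, so $\interpret{\Gamma} \in \setG$ and Theorem~\ref{theorem-forall} applies. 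By part~(1) of the present theorem applied to each $A_i$ under arbitrary parameter assignments, every $\eta(\interpret{A_i})$ is determined, so the equivalences of Theorem~\ref{theorem-forall}(2) are available. Now fix any $I$-assignment $\eta$ and let $\theta$ be the environment corresponding to it; validity of $\Gamma$ gives $[A_i]_\theta = \true$ for some $i < n$, whence by (2) the game $\eta(\interpret{A_i})$ is $\Player$-winning. Thus clause (c) of Theorem~\ref{theorem-forall}(2) holds, hence so does clause (b): $\EXH$ is a primitive recursive $\Player$-winning strategy for $\interpret{\Gamma}$, proving (4).
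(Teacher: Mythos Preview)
Your proof is correct and follows essentially the same approach as the paper: induction on $A$ for parts (1) and (2), then reading off (3) from the correspondence between environments $\theta$ and parameter assignments $\rho$, and (4) from Theorem~\ref{theorem-forall}. The only organizational differences are that you run the inductions for (1) and (2) simultaneously (the paper states them as two separate inductions, though the cases are the same), and that your argument for (4) is more explicit---you verify clause~(c) of Theorem~\ref{theorem-forall}(2) directly, whereas the paper appeals somewhat loosely to ``point~3 above'' (which is literally stated only for formulas, not sequents) before invoking Theorem~\ref{theorem-forall}.
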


\begin{proof}
Assume $\rho$ is an $\interpret{A}$-assignment.
\begin{enumerate}
\item
By induction on the definition of $\rho(\interpret{A})$. If $A$ is an atomic formula then $\rho(\interpret{A}) = \END(g)$ for some $g \in \{\Player,\Opponent\}$ is determined. If $A = A_1 \wedge A_2$ and each $\rho(\interpret{A_i})$ is determined then $\rho(\interpret{A}) = \rho(\interpret{A_1}) \wedge \rho(\interpret{A_2})$ is $\Player$-winning if both $\rho(\interpret{A_1})$, $\rho(\interpret{A_2})$ are $\Player$-winning, otherwise $\Opponent$ wins playing some move $\nth(i)$ such that $\rho(\interpret{A_i})$ is $\Opponent$-winning. A similar reasoning applies for $A = \forall x.B$. Assume $A = \FORALL X.B$: then $\rho(\interpret{A})$ is determined by Corollary \ref{corollary-forall}.1. If $A$ is disjunctive then $\rho(\interpret{A^\bot})$ is determined, therefore 
$\interpret{A}_\rho = \rho(\interpret{A^\bot})^\bot$ is determined.

%%%%%%%%%%%%%%%%%%
%14:50 23/09/2016%
%%%%%%%%%%%%%%%%%%

\item
By induction on the definition of $\interpret{A}$. Assume $A = \FORALL X.B$.
Then by Corollary \ref{corollary-forall}.2 we have: $\interpret{A}_{\rho} = \Player$ if and only if for any $\Var(X)$-assignment $\eta$ we have $\interpret{B}_{(\rho,\eta)} = \Player$ if and only if (by induction hypothesis on $\interpret{B}$) for any $\{X\}$-assignment $\psi$ we have $\interpret{B}_{(\theta,\psi)} = \true$ if and only if (by definition of $\interpret{A}_{\theta}$) we have $\interpret{A}_{\theta} = \true$. All other cases follows immediately by induction hypothesis.

\item
For any $\theta \in \setE(A)$ we may define some $\rho \in \setE(\interpret{A})$ corresponding to it, and conversely. The condition $\rho(a^\bot) = \rho(a)^\bot$ follows from $\theta(X^\bot) = \theta(X)^\bot$, and conversely. Thus, by point $2$ above, if $\interpret{A}_\rho = \Player$ for all $\rho \in \setE(\interpret{A})$, then $[A]_\theta = \true$ for all $\theta \in \setE(A)$, and conversely.

\item
By point $3$ above there is some $\Player$-winning strategy for $\interpret{\Gamma}$. By Theorem \ref{theorem-forall} and $\FV(\interpret{\Gamma}) = \emptyset$ the strategy may be chosen primitive recursive.
\end{enumerate}
\end{proof}

\section{Comparing with previous works}
\label{section-conclusion}
As we explained in the introduction, the game semantics more similar to our one is the game semantics by De Lataillade (\cite{JD08a, JD08b}), having no winning conditions for infinite plays, and used for characterizing type isomorphisms for a functional language, system $F$. Our game semantics has different definitions and a different goal. We do have the notion of nodes representing variables, as in as in \cite{JD08b}, but we do \emph{not} have in our game semantics a rule corresponding to second order elimination rule $\FORALL X.A \implies A[P/X]$, for $P$ predicate, as in \cite{JD08b}, \S 3.2. Indeed, our long-term goal is to decompose second order elimination rule into simpler rules, and to make a proof theoretical analysis of it. This is why we cannot assume second order elimination from the start, but we have instead to prove that our game semantics is sound for it. 

Our game semantics is built over the work of Lorenzen. Lorenzen introduced the idea of a winning move which matches affirmed and negated occurrence of an atomic formula (\cite{Lorenzen}, \S 1, Def. (D10)), similar to our $\EM$ move matching $X(\vec{t})$ with $X^\bot(\vec{t})$, for $X$ bound predicate variable. Another contribution of Lorenzen is the idea of backtracking/justification move, which he called ``sequence of references'' (\cite{Lorenzen}, \S 1). We also based our work over the works of Coquand (\cite{Coquand-1991, Coquand-1995}). Coquand, with his former ph.d. student Herbelin (\cite{HerbelinPhD}), introduced a synthetic combinatorial definition of game with backtracking/justification moves, proved that game semantics with backtracking is sound and complete for first order arithmetic, and defined an effective cut-elimination procedure for it. Game semantics was adapted to an interpretation of first order functional programming languages by Hyland and Ong (\cite{Hyland-Ong}). 

Second order game semantics of De Lataillade (\cite{JD08a, JD08b}) is build over the work by Hyland and Ong, adding a notion of ``generic'' node representing second order variables, and a second order elimination rule, which allow a player to replace a node representing a variable by any tree representing a game. We retained the first feature of his semantics and we dropped the second one, for the reasons we explained. Another difference is: Lataillade only considers strategies which are \emph{uniform}, that is, moving independently from the substitution $[P/X]$ in $\FORALL X.A \implies A[P/X]$ (\cite{JD08a}, Def. 21). Since we have no game rule expressing the rule $\FORALL X.A \implies A[P/X]$, we do not explicitly require uniformity. We could say that \emph{all strategies are uniform} in our second order game semantics, in the sense they do not depend on the boolean functions assigned to variable predicates by the ``judge of the play''. 

Another essential difference is that we added the winning conditions for infinite plays, which are not trivial. As we already pointed out, by Tarski's undefinability theorem, the family of winning conditions for a sound and complete interpretation of truth for $L(\PA^2)$ cannot be defined in $L(\PA^2)$ itself. In the papers of Hyland, Ong and Lataillade, winning conditions are not considered. These papers interpret a programming language, and they do not need winning conditions for this purpose.

\section{Acknowledgments}
We thank Federico Asperti for checking an early version of this paper and for suggesting several improvements. We thank Erik Krabbe, Shahid Rahman, Helge Ruckert for helpful comments and Gabriel Sandu for suggesting many interesting related works to read. We thank Silvia Steila for quotations about Set Theory. 

%BIBLIOGRAPHY
\bibliographystyle{asl}
\bibliography{Berardi-A-Sound-Complete-and-Effective-Game-Semantics-for-Second-Order-Arithmetic}

\end{document}